\setlist[enumerate]{noitemsep,topsep=3pt}
\setlist[itemize]{noitemsep,topsep=3pt}
\newenvironment{proofofclaim}{\vspace{-6mm} \paragraph{\normalfont \textit{Proof of Claim \theclaim.}} \hspace{-2.5mm}}{\hfill$\blacklozenge$ \medskip}
\newcommand{\msrepr}{\ensuremath{\operatorname{m}}}
\newcommand{\msl}{\{\hspace*{-0.1cm}\{}
\newcommand{\msr}{\}\hspace*{-0.1cm}\}}
\begin{document}


\setcounter{page}{315}
\publyear{24}
\papernumber{2185}
\volume{191}
\issue{3-4}

\finalVersionForARXIV


\title{Complexity and Equivalency of  Multiset Dimension and ID-colorings}

\author{Anni Hakanen\thanks{Also affiliated at: Department of Mathematics and Statistics, University of Turku,
                 FI-20014, Finland}\thanks{Address for correspondence: Department of Mathematics and Statistics, 20014 Turun Yliopisto, Finland}
\\
Universit\'e Clermont-Auvergne, CNRS\\
Mines de Saint-\'Etienne, Clermont-Auvergne-INP\\
 LIMOS, 63000 Clermont-Ferrand, France \\
 anehak@utu.fi
 \and  Ismael G. Yero \\
 Department of Mathematics, Universidad de C\'{a}diz\\
 Av. Ram\'on Puyol s/n, 11202 Algeciras, Spain\\
 ismael.gonzalez@uca.es }

\runninghead{A. Hakanen and  I.G. Yero}{Complexity and Equivalency of  Multiset Dimension and ID-colorings}

\maketitle

\begin{abstract}
This investigation is firstly focused into showing that two metric parameters represent the same object in graph theory. That is, we prove that the multiset resolving sets and the ID-colorings of graphs are the same thing. We also consider some computational and combinatorial problems of the multiset dimension, or equivalently, the ID-number of graphs. We prove that the decision problem concerning finding the multiset dimension of graphs is NP-complete. We consider the multiset dimension of king grids and prove that it is bounded above by $4$. We also give a characterization of the strong product graphs with one factor being a complete graph, and whose multiset dimension is not infinite.
\end{abstract}

\begin{keywords} multiset resolving set; multiset dimension; ID-colorings; ID-number; king grids.\\
{\bf AMS Subj.\ Class.\ (2020)}: 05C12; 05C76.
\end{keywords}

\section{Introduction}

The concept of metric dimension in graphs is one of the classical parameters in the area of graph theory. It is understood it has been independently introduced in the decade of 1970 in the two separate works \cite{Harary} and \cite{Slater}, which were aimed to consider identification properties of vertices in a graph. These identification properties were also connected with the Mastermind game in \cite{Caceres-2007}, and problems related to pattern recognition and image processing in \cite{Melter-1984}.
 After these two seminal works, the research on the topic remained relatively quiet until the first years of the new century, where the number of articles on the topic exploded. From this point on, several theoretical and applied results have been appearing, and nowadays, the metric dimension of graphs is very well studied. It is not our goal to include a lot of references on this fact, and we simply suggest the interested reader to consult the two recent surveys \cite{dorota-2022+,till-2022+}, which have a fairly complete amount of information on metric dimension in graphs and related topics.

\medskip
Among the research lines addressed in the investigation with metric dimension in graphs, a remarkable one is that of considering variations of the classical concept that are giving more insight into the main concept, and are indeed of independent interest. The range of variations include generalizations of the concept, particularizations in the style of identification of vertices, union of the metric properties with other graph situations, identification of other elements (edges for instance) of graphs, etc. The survey \cite{dorota-2022+} contains information on several of the most important variations already known.

\smallskip
As it happens, sometimes some variations could relate much between them, and sometimes there could be some that indeed represent the same structure. This is one of the contributions of our work. We describe the equivalence of two metric concepts that, at a first glance, appear to be separate from each other. To this end, we first present some basic terminology and notations that shall be used in our article.

\smallskip
Along our work, all graphs are simple, undirected and connected, unless we will specifically state the contrary. Let $G$ be a graph with vertex set $V(G)$. Two vertices $u,v$ are identified (recognized or determined) by a vertex $w$ if $d_G(u,w)\ne d_G(v,w)$ where $d_G(x,y)$ stands for the distance (in its standard version) between $x$ and $y$. A set $S\subset V(G)$ is a \emph{resolving set} for $G$, if every two vertices $u,v\in V(G)$ are identified by a vertex of $S$. The cardinality of a smallest possible resolving set for $G$ is the \emph{metric dimension} of $G$, denoted $\dim(G)$. A resolving set of cardinality $\dim(G)$ is called a \emph{metric basis}. These concepts are from \cite{Harary} and \cite{Slater}, although in the latter work, they had different names. Some recent significant contributions on the metric dimension of graphs are for instance \cite{Claverol,Geneson,Mashkaria,Sedlar,Wu}.

\smallskip
The metric representation of a vertex $x$ with respect to an ordered set of vertices $S=\{v_1,\dots,v_k\}$ is the vector $r(x|S)=(d_G(x,v_1),\dots,d_G(x,v_k))$. It can be readily observed that a set $S$ of vertices is a resolving set of a graph $G$ if and only if the set of metric representations of vertices of $G$ are pairwise different. This terminology turns out to be very useful while working with this concept.

\medskip
Throughout our exposition, we are then focused on studying two (indeed one) metric dimension related parameters: the multiset dimension and the ID-number of graphs, which are variants of metric dimension that use multisets of distances instead of vectors, to uniquely identify the vertices of a graph. Our exposition is organized as follows. The next section is dedicated to prove that in fact the multiset dimension and the ID-number of graphs are the same parameter, and based on this, we just follow the terminology of multiset resolving sets and multiset dimension in our work. Section \ref{sec-complexity} is focused on showing that the decision problem concerning finding the multiset dimension of graphs is NP-complete. Next, in Section \ref{sec:king-grid}, we consider the multiset dimension of king grids, namely, the strong product of a path with itself. Section \ref{sec:strong-prod} contains a characterization of the strong product graphs with one factor being a complete graph, and whose multiset dimension is not infinite. Finally, we end our work with some questions and open problems that might be of interest for future research.

\section{Two equivalent metric concepts}

A modified version of the classical metric dimension of graphs was first presented in \cite{rino-2017} where the use of ``multisets'' instead of vectors was initiated while considering metric representations of vertices with respect to a given set. That is, if $u\in V(G)$ and $W = \{w_1, \ldots, w_t\}$, then the {\em multiset
representation of $u$ with respect to} $W$ is given by
$${\rm m}_G(u|W)=\msl d_G(u, w_1), \ldots, d_G(u, w_t) \msr,$$
where $\msl \cdot \msr$ represents a multiset. In order to facilitate our exposition, we write $\msl . \msr + i$ to denote the multiset obtained from $\msl . \msr$ by adding $i$ to every element of such multiset. The set $W$ is a \emph{multiset resolving set} for $G$ if the collection of multisets ${\rm m}_G(u|S)$ with $u\in V(G)$ are pairwise distinct. The \emph{multiset dimension} of $G$, denoted $\dim_{ms}(G)$, represents the cardinality of a smallest possible multiset resolving set of $G$. Multiset resolving sets do not always exist in a given graph. For those graphs $G$ which do not contain any multiset resolving set, the agreement that $\dim_{ms}(G)=\infty$ was taken in \cite{rino-2017}. Some other investigations on the multiset dimension of graphs are \cite{Alfarisi,bong-2021,Isariyapalakul,Khemmani-2020,Khemmani-2018}. It is clear that any multiset resolving set is also a resolving set, since the fact that two multisets are different implies that the vectors with the same elements are also different. This means that for any graph $G$,
\begin{equation}\label{eq-dim-mdim}
\dim(G)\le \dim_{ms}(G).
\end{equation}

On the other hand, the following concepts were defined in \cite{Chartrand-2021}. Consider a connected graph $G$ of diameter $d$ and a set of vertices $S\subset V(G)$. Now, for every vertex $x\in V(G)$, the \emph{code} of $x$ with respect to $S$ is the $d$-vector $\vec{d}(x|S)=(a_1,a_2,\dots,a_d)$ where $a_i$, with $i\in\{1,\dots,d\}$ represents the number of vertices in $S$ at distance $i$ from $x$. If all the codes of vertices of $G$ are pairwise different, then $S$ is called an \emph{identification coloring} or \emph{ID-coloring}. Moreover, a graph $G$ that has an ID-coloring is called an ID-graph. In this sense, for any ID-graph $G$, the cardinality of a smallest ID-coloring is the \emph{ID-number} of $G$, denoted by $ID(G)$. Other contributions in this direction are \cite{Kono-2022,Kono-2022a,Kono-2021}.

\medskip
We next show that the two parameters defined above are indeed the same. To this end, we may remark that for an ID-graph $G$ of diameter $d$ with an ID-coloring $S$, any vertex $x\in V(G)$ with code $\vec{d}(x|S)=(a_1,a_2,\dots,a_d)$ satisfies $\sum_{i=1}^{d}a_i=|S|$ if $x\notin S$ and $\sum_{i=1}^{d}a_i=|S|-1$ if $x\in S$.

\begin{theorem}
\label{th:equivalence}
Let $G$ be a graph of diameter $d$. Then $S\subset V(G)$ is an ID-coloring for $G$ if and only if $S$ is a multiset resolving set for $G$.
\end{theorem}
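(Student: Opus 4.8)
The plan is to show that, for a fixed set $S$, the code $\vec{d}(x|S)$ and the multiset representation ${\rm m}_G(x|S)$ carry exactly the same information, so that two vertices share a code if and only if they share a multiset representation. The equivalence of the two parameters then follows immediately: all codes are pairwise distinct precisely when all multiset representations are pairwise distinct, which is the same as saying that $S$ is an ID-coloring if and only if $S$ is a multiset resolving set.

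First I would record the elementary combinatorial fact underlying everything: both objects are governed by the distance-count vector $(c_0, c_1, \dots, c_d)$, where $c_i$ denotes the number of vertices of $S$ lying at distance exactly $i$ from $x$. The multiset ${\rm m}_G(x|S)$ is nothing but this vector read as a multiset, namely the value $i$ appears with multiplicity $c_i$; hence knowing the multiset is the same as knowing $(c_0,\dots,c_d)$. The code $\vec{d}(x|S)=(a_1,\dots,a_d)$ is this same data with the single entry $c_0$ deleted, since $a_i=c_i$ for every $i\ge 1$. I would also note that $c_0\in\{0,1\}$, with $c_0=1$ exactly when $x\in S$, because the only vertex at distance $0$ from $x$ is $x$ itself.

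The one genuine point is to argue that deleting $c_0$ loses no distinguishing power, i.e.\ that $\vec{d}(x|S)=\vec{d}(y|S)$ holds if and only if ${\rm m}_G(x|S)={\rm m}_G(y|S)$. The direction from multisets to codes is immediate: equal multisets yield equal counts $c_i$ for all $i$, hence equal codes. For the converse I would invoke the remark preceding the statement, namely that $\sum_{i=1}^{d}a_i$ equals $|S|$ when $x\notin S$ and $|S|-1$ when $x\in S$. Thus, if two vertices $x$ and $y$ have equal codes, their coordinate sums coincide, which forces them to agree on membership in $S$; consequently $c_0$ takes the same value for both, so the full vectors $(c_0,\dots,c_d)$, and therefore the multisets, agree.

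Combining the two directions shows that for every pair of vertices $x,y$ we have $\vec{d}(x|S)=\vec{d}(y|S)$ if and only if ${\rm m}_G(x|S)={\rm m}_G(y|S)$, and the theorem follows. I expect the main (and essentially only) obstacle to be making the recovery of $c_0$ from the code rigorous; everything rests on the observation that $c_0$ is a binary quantity pinned down by the total $\sum_{i=1}^{d}a_i$, which is exactly what the pre-theorem remark supplies.
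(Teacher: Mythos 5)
Your proposal is correct and follows essentially the same route as the paper: both arguments establish that the code and the multiset representation determine each other, with the only nontrivial point being the recovery of the distance-$0$ information (membership of $x$ in $S$) from the code via the sum $\sum_{i=1}^{d}a_i$ being $|S|$ or $|S|-1$. Your packaging of this as the count vector $(c_0,c_1,\dots,c_d)$ is a slightly tidier formulation, but the underlying bijection and the key use of the pre-theorem remark are identical to the paper's proof.
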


\begin{proof}
The result follows directly from the following fact. Let $S\subset V(G)$ and let $x\in V(G)$. Consider the multiset representation ${\rm m}_G(x|S)=\msl d_G(x, w_1), \ldots, d_G(x, w_t) \msr$. From ${\rm m}_G(x|S)$ we construct the vector $\vec{d}(x|S)=(a_1,a_2,\dots,a_d)$ as follows. Each $a_i$ equals the number of elements in ${\rm m}_G(x|S)$ with value $i$ for any $i\in\{1,\dots,d\}$. On the contrary, if we have the vector $\vec{d}(x|S)=(a_1,a_2,\dots,a_d)$, then the multiset representation ${\rm m}_G(x|S)$ can be obtained in the following way. First, if $\sum_{i=1}^{d}a_i=|S|-1$, then $x\in S$ and we need to add the value zero (0) to ${\rm m}_G(x|S)$. Otherwise, if $\sum_{i=1}^{d}a_i=|S|$, then $x\notin S$, and the value zero (0) does not belong to ${\rm m}_G(x|S)$. Now, in both cases, for every $a_i\in \vec{d}(x|S)$ such that $a_i\ne 0$, we add to ${\rm m}_G(x|S)$ $a_i$ elements equal to $i$.

Based on the equivalence that exists between these two representations, the fact that the codes of vertices of $G$ are pairwise different implies that the collection of multisets are pairwise distinct as well, and vice versa. Consequently, it is clear that a given set $S\subset V(G)$ is an ID-coloring of $G$ if and only $S$ is a multiset resolving set.
\end{proof}

Based on the equivalence above, we conclude the next consequence.
\begin{corollary}
\label{cor:equivalence}
For any graph $G$, $\dim_{ms}(G)=ID(G)$.
\end{corollary}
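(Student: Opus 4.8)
The plan is to derive the corollary as an immediate numerical consequence of Theorem~\ref{th:equivalence}. Since the corollary claims equality of two quantities, $\dim_{ms}(G)$ and $ID(G)$, I would first reduce the statement to a case analysis on whether $G$ admits the relevant resolving structure at all, because both parameters are defined only conditionally: $ID(G)$ is defined for ID-graphs, and $\dim_{ms}(G)$ is finite only when a multiset resolving set exists.

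First I would treat the existence question. By Theorem~\ref{th:equivalence}, a set $S\subset V(G)$ is an ID-coloring if and only if it is a multiset resolving set. Hence $G$ possesses an ID-coloring exactly when it possesses a multiset resolving set; that is, $G$ is an ID-graph if and only if $\dim_{ms}(G)$ is finite. This handles the degenerate case: if $G$ has no multiset resolving set, then by the convention recalled in the excerpt $\dim_{ms}(G)=\infty$, and simultaneously $G$ fails to be an ID-graph, so the two parameters agree (both being ``undefined/infinite'' in the same situation).

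Next, assuming $G$ is an ID-graph, I would argue that the two minima coincide. By definition $ID(G)$ is the smallest cardinality of an ID-coloring and $\dim_{ms}(G)$ is the smallest cardinality of a multiset resolving set. Theorem~\ref{th:equivalence} establishes a bijection-free equality between the two families of sets: they are literally the same collection of subsets of $V(G)$. Since the two parameters are the minimum cardinality taken over identical collections of sets, their minima are equal, giving $\dim_{ms}(G)=ID(G)$.

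I do not expect any genuine obstacle here, as all the work is carried out in Theorem~\ref{th:equivalence}; the only point requiring mild care is the bookkeeping around the conventions for the infinite/undefined case, to ensure the equality is stated meaningfully when no resolving set exists. A cleanly written proof therefore amounts to invoking the theorem, noting the two set-families coincide, and concluding that equal families have equal minimum cardinalities.
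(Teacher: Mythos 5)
Your proposal is correct and matches the paper exactly: the paper gives no separate proof, stating only that the corollary follows from Theorem~\ref{th:equivalence}, precisely because ID-colorings and multiset resolving sets are the same collection of subsets, so their minimum cardinalities (and the infinite/non-existent case) coincide. Your extra bookkeeping about the $\dim_{ms}(G)=\infty$ convention is a reasonable refinement of the same argument, not a different route.
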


It is then now clear that graphs defined in \cite{Chartrand-2021} as ID-graphs are those ones satisfying that $\dim_{ms}(G)<\infty$ according to the terminology from \cite{rino-2017}.

\section{Complexity results}
\label{sec-complexity}

This section is centered into considering the proof of the NP-completeness for the following decision problem, which in addition allows to conclude that computing the multiset dimension of graphs is NP-hard, and clearly, based on Theorem \ref{th:equivalence} and Corollary \ref{cor:equivalence}, it means that computing the ID-number of graphs is NP-hard as well. Our proof is somewhat inspired by the proof of the NP-hardness of the outer multiset dimension problem presented in \cite{gil-pons-2019}.

\medskip
Distinct vertices $u$ and $v$ are called \emph{twins} if they have the same set of neighbors. It is well known that if two vertices are twins, then each (multiset) resolving set contains at least one of them.\\

\indent \textsc{Multiset Dimension} \\
\indent \textbf{Instance:} A graph $G=(V,E)$ and an integer $k$ satisfying $1 \le k \le |V|$. \\
\indent \textbf{Question:} Is $\dim_{ms}(G) \le k$?

\begin{theorem}
\label{th:complexity}
The \textsc{Multiset Dimension} problem is NP-complete.
\end{theorem}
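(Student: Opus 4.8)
Membership in NP is routine: given a candidate set $S$ with $|S| \le k$, one can compute all distances in polynomial time (e.g.\ BFS from each vertex of $S$), form the multiset representations $\operatorname{m}_G(x|S)$ for every $x \in V(G)$, and check in polynomial time whether they are pairwise distinct. So the certificate is the set $S$ itself, and the verification is clearly polynomial.

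The heart of the argument is the hardness reduction, and following the hint, I would mimic the reduction for the outer multiset dimension from \cite{gil-pons-2019}. The natural source problem is a known NP-complete problem such as \textsc{3-SAT}, \textsc{Dominating Set}, or \textsc{Set Cover}; I would reduce from whichever one makes the gadget construction cleanest (the outer-multiset-dimension proof should dictate the choice). Given an instance of the source problem, I would construct in polynomial time a graph $G$ together with an integer $k$ so that the source instance is a YES-instance if and only if $\dim_{ms}(G) \le k$. The key tool available to me is the twin observation recorded just before the theorem: if $u$ and $v$ are twins, then every multiset resolving set contains at least one of them. This lets me force membership in any resolving set by attaching twin vertices (for instance, pendant twins or larger twin-classes) to the ``selectable'' elements of the construction, thereby translating ``choosing a small resolving set'' into ``choosing a small solution to the source instance.'' I would design the gadgets so that the only way to keep the multiset dimension at most $k$ is to pick exactly the vertices corresponding to a valid solution (a satisfying assignment, a dominating set, or a set cover), while any invalid choice leaves two vertices with identical multiset representations.

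The main obstacle, and where the real care is needed, is controlling the multiset representations precisely enough that two vertices collide \emph{if and only if} the source instance fails to be solved. Because multisets discard the positional information of the distance vector, the usual resolving-set reductions do not transfer verbatim: one must ensure that forced twin vertices not only belong to $S$ but also produce the intended distance multisets, and that no unintended pair of vertices accidentally shares a multiset. I would handle this by keeping the diameter small and the distance structure rigid (e.g.\ via a dominating apex vertex or a central clique), so that the multiset of distances from any vertex is essentially determined by how many selected vertices lie at distance $1$ versus distance $2$, reducing the collision analysis to a counting condition. I would then verify the forward direction (a source solution yields a resolving set of size $\le k$) by direct computation of the codes, and the backward direction (a resolving set of size $\le k$ yields a source solution) by showing that the twin constraints plus the size bound force the selected vertices to encode a valid solution. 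Checking that no spurious collisions arise among the auxiliary gadget vertices is the most delicate bookkeeping step.
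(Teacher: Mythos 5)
Your NP-membership argument is fine, and your instinct to use the twin observation to force vertices into any multiset resolving set matches the paper, which does reduce from 3-SAT with variable and clause gadgets whose ``selectable'' vertices are backed up by forced twin pairs. However, the one concrete structural idea you commit to for controlling collisions --- keeping the diameter small via a dominating apex vertex or central clique, so that every multiset is determined by how many chosen vertices lie at distance $1$ versus distance $2$ --- is precisely what cannot work for this parameter, and it is the opposite of what the paper does. If $G$ has diameter at most $2$ and $S$ is a multiset resolving set with $|S|=k$, then each vertex of $S$ has multiset consisting of one $0$, some number $a$ of $1$'s, and $k-1-a$ $2$'s; distinctness of these $k$ multisets forces the vertices of $S$ to have pairwise distinct numbers of neighbours \emph{inside} $S$, hence to realize every value in $\{0,1,\dots,k-1\}$, which is impossible for $k\ge 2$ because a graph on $k$ vertices cannot contain both a vertex of degree $0$ and a vertex of degree $k-1$. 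So diameter-$2$ graphs other than trivial paths have no multiset resolving sets at all --- a fact this very paper invokes later for $P_3\boxtimes P_3$ --- and an apex-dominated construction would map every 3-SAT instance to a graph with $\dim_{ms}=\infty$, collapsing the reduction. More generally, small diameter yields only about $2k+1$ available multisets over a $k$-set, far too few to separate $n$ vertices: this scarcity is the central obstacle of the problem, not a tool.

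The missing idea is to spread distances out rather than compress them. The paper attaches to each variable gadget and each clause gadget a long pendant path ending in a twin pair, with all path lengths pairwise distinct ($t_i=5(i+1)$ for variables, $s_j=5(n+j+1)$ for clauses). The forced twin endpoints then contribute to every vertex's multiset a signature of large, pairwise distinct values which identifies the gadget containing the vertex and its position inside it, while the truth assignment is encoded by which of $a_i^1$ or $b_i^1$ is chosen: the chosen vertex lies at distance $2$ or $3$ from the clause vertex $c_j^3$ according to whether $x_i$ satisfies $C_j$, and this single $2$-versus-$3$ difference is what separates $c_j^3$ from $c_j^1$ exactly when the clause is satisfied. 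Without this length-separation device, or something playing the same role, the ``delicate bookkeeping'' you defer has no mechanism by which it could succeed, so as written the hardness direction of your proposal has a genuine gap.
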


\begin{proof}
	The problem is clearly in NP. We prove the NP-completeness by a reduction
	from 3-SAT. Consider an arbitrary input to 3-SAT, that is, a formula $F$ with $n$ variables and $m$ clauses, which does not have a clause containing both the positive and the negative literals of the same variable. Let $x_1, x_2, \dots, x_n$ be the variables, and
	let $C_1, C_2, \dots, C_m$ be the clauses of $F$. We next construct a connected
	graph $G$ based on this formula $F$. To this end, we use the following gadgets.
	
	For each variable $x_i$ we construct a \emph{variable gadget} as follows (see Figure
	\ref{fig:gadgetxi}).
	\begin{itemize}
		\item Vertices $T_i$, $F_i$ are the ``true'' and ``false'' ends of the gadget. The
		gadget is attached to the rest of the graph only through these vertices.
		\item Vertices $a_i^1$, $a_i^2$, $b_i^1$, $b_i^2$ represent the value of the
		variable $x_i$, that is, $a_i^1$ and $a_i^2$ will be used to represent that
		variable $x_i$ is true, and $b_i^1$ and $b_i^2$ that it is false. The vertices $a_i^1$ and $b_i^1$ are adjacent, and so are the vertices $a_i^2$ and $b_i^2$. Additionally, the vertices $a_i^1$ and $a_i^2$ are adjacent to $T_i$ and the vertices $b_i^1$ and $b_i^2$ are adjacent to $F_i$.
		\item $P_{t_i}=d_i^1d_i^2\cdots d_i^{t_i}$ is a path such that $d_i^1$ is adjacent to $T_i$ and $F_i$, while $d_i^{t_i}$ is adjacent to the two vertices $e_i^1$ and $e_i^2$. Notice that the vertices $e_i^1$ and $e_i^2$  are twins, and so, each multiset resolving set of $G$ must contain at least one of them.
	\end{itemize}
	
	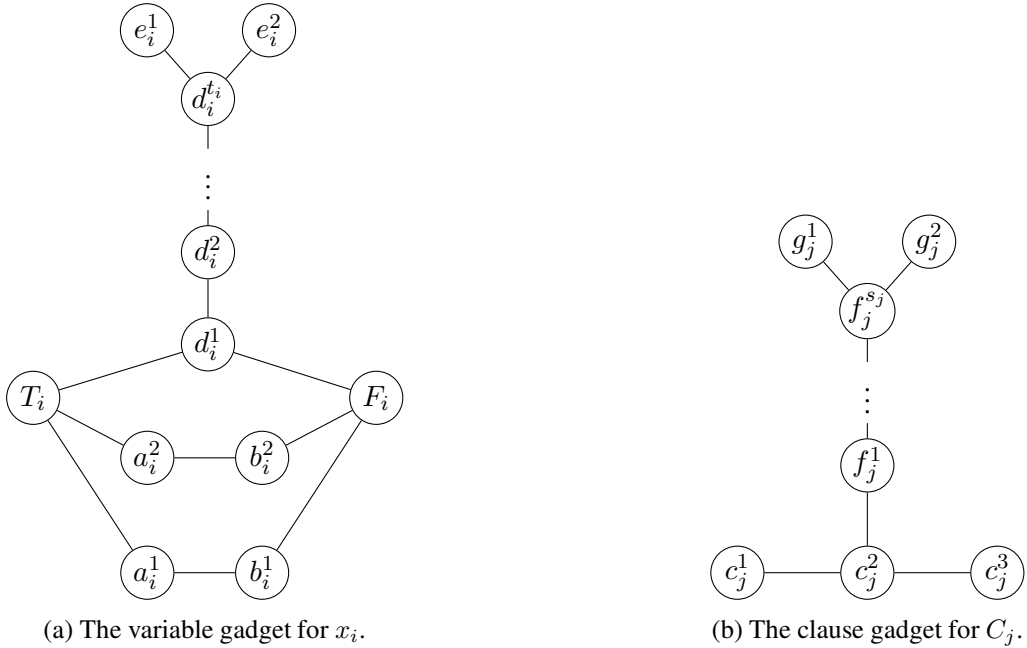
\begin{figure}
		\begin{subfigure}[b]{.44\linewidth}
			\centering
			\begin{tikzpicture}
				\tikzset{main node/.style={circle,fill=white,draw,minimum
						size=0.7cm,inner sep=0pt},}
				\node[main node] (1) {$a_i^1$};
				\node[main node] (2) [right = .8cm of 1] {$b_i^1$};
				\node[main node] (3) [above = .8cm of 1]  {$a_i^2$};
				\node[main node] (4) [above = .8cm of 2]  {$b_i^2$};
				\node[main node] (5) [above left = 1.8cm and 1cm of 1] {$T_i$};
				\node[main node] (6) [above right = 1.8cm and 1cm of 2] {$F_i$};
				\node[main node] (7) [above right = 0.2cm and 1.8cm of 5] {$d_i^1$};
				\node[main node] (8) [above = 0.5cm of 7] {$d_i^2$};
				\node[] (9) [above = 0.2cm of 8] {$\vdots$};
				\node[main node] (10) [above = 0.3cm of 9] {$d_i^{t_i}$};
				\node[main node] (11) [above left = 0.4cm and 0.3cm of 10] {$e_i^{1}$};
				\node[main node] (12) [above right = 0.4cm and 0.3cm of 10] {$e_i^{2}$};
				
				\path[draw]
				(1) edge node {} (2)
				(3) edge node {} (4)
				(5) edge node {} (1)
				(2) edge node {} (6)
				(4) edge node {} (6)
				(3) edge node {} (5)
				(5) edge node {} (7)
				(6) edge node {} (7)
				(7) edge node {} (8)
				(8) edge node {} (9)
				(9) edge node {} (10)
				(11) edge node {} (10)
				(12) edge node {} (10);
			\end{tikzpicture}
			\caption{The variable gadget for $x_i$.}
			\label{fig:gadgetxi}
		\end{subfigure}
		\hfill
		\begin{subfigure}[b]{.44\linewidth}
			\centering
			\begin{tikzpicture}
				\tikzset{main node/.style={circle,fill=white,draw,minimum size=0.7cm,inner
						sep=0pt},}
				\tikzset{many node/.style={circle,fill=white,draw,minimum size=0.9cm},}
				\node[main node] (1) {$c_j^1$};
				\node[main node] (2) [right = 1cm of 1]  {$c_j^2$};
				\node[main node] (3) [right = 1cm of 2]  {$c_j^3$};
				\node[main node] (4) [above = 0.7cm of 2] {$f_j^1$};
				\node[] (5) [above = 0.2cm of 4] {$\vdots$};
				\node[main node] (6) [above = 0.3cm of 5] {$f_j^{s_j}$};
				\node[main node] (7) [above left = 0.4cm and 0.3cm of 6] {$g_j^{1}$};
				\node[main node] (8) [above right = 0.4cm and 0.3cm of 6] {$g_j^{2}$};
				
				\path[draw]
				(1) edge node {} (2)
				(2) edge node {} (3)
				(2) edge node {} (4)
				(4) edge node {} (5)
				(5) edge node {} (6)
				(6) edge node {} (7)
				(6) edge node {} (8);
			\end{tikzpicture}
			\caption{The clause gadget for $C_j$.}
			\label{fig:gadgetCj}
		\end{subfigure}\vspace*{-2mm}
		\caption{The variables and clause gadgets used in the reduction.}\vspace*{-2mm}
	\end{figure}
	
	For each clause $C_j$ we construct a \emph{clause gadget} as follows (see Figure
	\ref{fig:gadgetCj}).
	
	\begin{itemize}
		\item The vertices $c_j^1$, $c_j^2$ and $c_j^3$ form a path. The vertices $c_j^1$ and $c_j^3$ will be helpful in determining whether the clause $C_j$ is satisfied.
		\item $P_{s_j}=f_j^1f_j^2\cdots f_{j}^{s_j}$ is a path such that $f_j^1$ is adjacent to $c_j^2$, while $f_{j}^{s_j}$ is adjacent to the two vertices $g_j^1$ and $g_j^2$. The vertices $g_j^1$ and $g_j^2$  are twins, and so, each multiset resolving set of $G$ must contain at least one of them.
	\end{itemize}

	The orders of the paths $P_{t_i}$ and $P_{s_j}$ must be pairwise distinct in order for the multiset representations to be different and our reduction to work. Let $t_i = 5(i+1)$ for all $i \in \{1, \ldots , n\}$ and $s_j = 5(n+j+1)$ for all $j \in \{1, \ldots , m\}$. The sum of all the vertices in the variable and clause gadgets is clearly polynomial in terms of $n+m$.
	
	The variable and clause gadgets are connected in the following way in order to construct our graph~$G$.
		\begin{itemize}
\itemsep=0.95pt
		\item Vertices $c_j^1$ are adjacent to vertices $T_i$, $F_i$ for all $j$ and $i$.
		\item If a variable $x_i$ does not appear in a clause $C_j$, then the vertices $T_i$, $F_i$ are adjacent to $c_j^3$.
		\item If a variable $x_i$ appears as a positive literal in a clause $C_j$, then the vertex $F_i$ is adjacent to $c_j^3$.
		\item If a variable $x_i$ appears as a negative literal in a clause $C_j$, then the vertex $T_i$ is adjacent to $c_j^3$.
	\end{itemize}
	
	Observe that $G$ is connected and its order is polynomial in terms of the quantity of variables and clauses of the 3-SAT instance. We shall show that $F$ is satisfiable if and only if $\dim_{ms}(G)=2m+n$. To this end, we proceed with a series of claims that will complete our whole reduction.
	
	\begin{claim}\label{claim:2n+m}
		We have $\dim_{ms}(G)\ge 2n+m$.
	\end{claim}
	\begin{proofofclaim}
		Let $S$ be a multiset basis of $G$.
		The vertices $e_i^1$ and $e_i^2$ are twins, and thus $S$ contains at least one of them for every $i\in\{1,\dots,n\}$. Similarly, the vertices $g_j^1$ and $g_j^2$ are twins and at least one of them is in $S$ for every $j\in\{1,\dots,m\}$. Finally, in order to have distinct multiset representation for the vertices $a_i^1$, $a_i^2$, $b_i^1$ and $b_i^2$, at least one of these four vertices must be in $S$. Thus, we have $\dim_{ms}(G)=|S|\ge 2n+m$.
	\end{proofofclaim}\vspace*{-1mm}

	\begin{claim}\label{claim:SATtoDIM}
		If $F$ is satisfiable, then $\dim_{ms} (G) = 2n+m$.
	\end{claim}
	\begin{proofofclaim}
		Consider a satisfying assignment for $F$ and construct a set $S^*$ containing $2n+m$ vertices as next described.
		\begin{itemize}
    \itemsep=0.95pt
			\item For each $i\in \{1,\dots, n\}$, we add the vertex $e_i^1$ to $S^*$.
			\item For each $j\in \{1,\dots, m\}$, we add the vertex $g_j^1$ to $S^*$.
			\item For each variable $x_i$, if $x_i = \texttt{true}$, then we add the vertex $a_i^1$ to $S^*$, otherwise if $x_i = \texttt{false}$, then we add the vertex $b_i^1$ to $S^*$.
		\end{itemize}
		
		We will show that the set $S^*$ is a multiset resolving set of $G$. We denote by $S^*_{x_i}$ and $S^*_{C_j}$ the vertices of the set $S^*$ that are \emph{not} in the gadget of $x_i$ and $C_j$, respectively. We will first express the multiset representations of the vertices in the variable gadgets with the help of the multiset representation of $T_i$ with respect to $S^*_{x_i}$. Since the vertices $c_j^1$ are adjacent to all $T_i$ and $F_i$, the distance from $T_i$ to all $T_{i'}$ and $F_{i'}$ for $i' \neq i$ is 2. Now, we have
		\begin{align*}
			\msrepr( T_i | S^*_{x_i} ) &= \msl
			& 3, &&|\;\;& \text{for each } i' \in \{1,\ldots,n\}, i' \neq i \\
			&& t_{i'}+3, &&|\;\;& \text{for each } i' \in \{1,\ldots,n\}, i' \neq i \\
			&& s_{j}+3, &&|\;\;& \text{for each } j \in \{1, \ldots , m\}
			\msr .
		\end{align*}
		For instance, the distance between the vertex $T_i$ and $e_{i'}^1$ equals $t_{i'}+3$, because one shortest path between them is $T_ic_1^1T_{i'}d_{i'}^1 \cdots d_{i'}^{t_{i'}}e_{i'}^1$. Other cases are deduced similarly from the construction of the graph $G$.

\medskip		
		Now, we can write the multiset representations of the vertices in the variable gadget of $x_i$ as follows.
		\begin{itemize}
			\item $\msrepr( T_i | S^* ) = \msrepr( T_i | S^*_{x_i} ) \cup \msl t_i + 1, y \msr$, where $y=1$ when $a_i^1 \in S^*$ and $y=2$ when $b_i^1 \in
                     S^*$.
			\item $\msrepr( F_i | S^* ) = \msrepr( T_i | S^*_{x_i} ) \cup \msl t_i + 1, y \msr$, where $y=1$ when $b_i^1 \in S^*$ and $y=2$ when $a_i^1 \in
                     S^*$.
\eject
			\item $\msrepr( d_i^h | S^* ) = (\msrepr( T_i | S^*_{x_i} ) + h ) \cup \msl t_i - h + 1, h + 1 \msr$ for all $h \in \{1, \ldots , t_i\}$.
			\item $\msrepr( e_i^h | S^* ) = (\msrepr( T_i | S^*_{x_i} ) + t_i + 1 ) \cup \msl t_i + 2, y \msr$, where $y = 0$ when $e_i^h = e_i^1$ and $y = 2$ when $e_i^h = e_i^2$.
			\item $\msrepr( a_i^h | S^* ) = (\msrepr( T_i | S^*_{x_i} ) + 1 ) \cup \msl t_i + 2, y \msr$, where when $a_i^1 \in S^*$, we have $y=0$ for $a_i^1$ and $y=2$ for $a_i^2$, and when $b_i^1 \in S^*$, we have $y=1$ for $a_1^i$ and $y=3$ for $a_i^2$.
			\item $\msrepr( b_i^h | S^* ) = (\msrepr( T_i | S^*_{x_i} ) + 1 ) \cup \msl t_i + 2, y \msr$, where when $a_i^1 \in S^*$, we have $y=1$ for $b_i^1$ and $y=3$ for $b_i^2$, and when $b_i^1 \in S^*$, we have $y=0$ for $b_1^i$ and $y=2$ for $b_i^2$. \vspace*{1mm}
		\end{itemize}
		
		As for the vertices of the clause gadgets, we will express them with an auxiliary representation as well. To that end, observe that
		\begin{align*}
			\msrepr( c_j^1 | S^*_{C_j} ) &= \msl
			& 2, 		&&|\;\;& \text{for each } i \in \{1,\ldots,n\} \\
			&& t_{i}+2, 	&&|\;\;& \text{for each } i \in \{1,\ldots,n\} \\
			&& s_{j'}+4,	&&|\;\;& \text{for each } j' \in \{1, \ldots , m\},
			j' \neq j \msr .
		\end{align*}
		
		We then write the multiset representations of the vertices of the clause gadget of $C_j$ other than $c_j^3$ as follows.
		\begin{itemize}
			\item $\msrepr( c_j^1 | S^* ) = \msrepr( c_j^1 | S^*_{C_j} ) \cup \msl s_j + 2 \msr$.
			\item $\msrepr( c_j^2 | S^* ) = (\msrepr( c_j^1 | S^*_{C_j} ) + 1) \cup \msl s_j + 1 \msr$.
			\item $\msrepr( f_j^h | S^* ) = (\msrepr( c_j^1 | S^*_{C_j} ) + h + 1) \cup \msl s_j - h + 1 \msr$ for all $h \in \{1, \ldots , s_j\}$.
			\item $\msrepr( g_j^h | S^* ) = (\msrepr( c_j^1 | S^*_{C_j} ) + s_j + 2) \cup \msl y \msr$, where $y = 0$ when $g_j^h = g_j^1$ and $y = 2$ when $g_j^h = g_j^2$. \vspace*{1mm}
		\end{itemize}
		
		Since the values of $t_i$ and $s_j$ are large and distinct enough (recall that $t_i = 5(i+1)$ and $s_j=5(n+j+1)$), the multiset representations of the vertices in the variable and clause gadgets (other than $c_j^3$) are pairwise distinct. Indeed, notice that the values $t_{i'}+3$ and $s_j+3$ form a pattern to the multiset representations of a vertex of a variable gadget which is easy to distinguish from the corresponding pattern of $t_i+2$ and $s_{j'}+4$ of a vertex from a clause gadget. Thus, we readily observe that the multiset representations of the vertices in a variable gadget are distinct from those of the vertices of clause gadgets. These patterns, or the anomalies present in them, to be more precise, are also the reason why vertices in two different variable gadgets (or clause gadgets) have distinct multiset representations. Indeed, there is a ``gap'' in this pattern where $t_i+3$ should be for all vertices of the gadget of $x_i$. Thus, the multiset representations of the vertices within the same variable or clause gadget are distinct.

\medskip		
		Let us then consider the vertices $c_j^3$. Similarly to the vertex $c_j^1$, the multiset representation of $c_j^3$ contains $s_j+2$ once, $t_i+2$ for each $i \in \{1, \ldots ,n\}$, and $s_{j'} + 2$ for each $j' \in \{1, \ldots , m\}$, $j' \neq j$. However, the distance from $c_j^3$ to the vertices $a_i^1$ and $b_i^1$ is 2 or 3 depending on whether the variable $x_i$ appears in the clause $C_j$ and in which form (positive or negative), and whether $x_i = \texttt{true}$ or $x_i = \texttt{false}$ according to the truth assignment. More precisely, if the variable $x_i$ does not appear in the clause $C_j$, then $d(c_j^3,a_i^1) = d(c_j^3,b_i^1) = 2$. If $x_i$ appears in $C_j$ but the clause $C_j$ is not satisfied due to $x_i$ (note that here $C_j$ can be satisfied but due to the truth value of some other variable $x_{i'}$), then the distance from $c_i^3$ to whichever of $a_i^1$ and $b_i^1$ is in $S^*$ is again 2. However, if $C_j$ is satisfied due to $x_i$, the distance from $c_j^3$ to whichever of $a_i^1$ and $b_i^1$ is in $S^*$ is 3. (For example, if $x_i$ appears as a positive literal in $C_j$, the edge $c_j^3F_i$ is present in $G$ whereas $c_j^3T_i$ is not. Thus, we have $d(c_j^3,a_i^1) = 3$ and $d(c_j^3,b_i^1) = 2$. Now, if the truth assignment of $x_i$ leads to $C_j$ being satisfied, we have $x_i = \texttt{true}$ and $a_i^1 \in S^*$. Therefore, the multiset representation of $c_j^3$ contains 3 due to $a_i^1$.)
		
\medskip
		Since the set $S^*$ is constructed using a truth assignment that satisfies $F$, there is at least one $3$ in the multiset representation of $c_j^3$. Thus, the multiset representation of $c_j^3$ is almost the same as the multiset representation of $c_j^1$ except that at least one 2 (in $\msrepr (c_j^1 | S^*)$) is swapped to 3. Thus, $c_j^3$ and $c_j^1$ have distinct multiset representations. Furthermore, based on the arguments concerning the multiset representations of the other vertices of $G$, it is clear that each $c_j^3$ has a distinct multiset representation compared to all other vertices of $G$.
		
		Consequently, the set $S^*$ is a multiset resolving set of $G$, and the claim holds due to Claim~\ref{claim:2n+m}.
	\end{proofofclaim}\vspace*{-1mm}

	\begin{claim}\label{claim:DIMtoSAT}
		If $\dim_{ms} (G) = 2n+m$, then $F$ is satisfiable.
	\end{claim}
	\begin{proofofclaim}
		Let $S$ be a multiset basis of $G$. By the arguments in the proof of Claim \ref{claim:2n+m}, the set $S$ must contain exactly one of the two vertices $e_i^1$ or $e_i^2$ for every $i\in \{1,\dots, n\}$; exactly one of the two vertices $g_j^1$ or $g_j^2$ for every $j\in \{1,\dots, m\}$; and exactly one of the vertices $a_i^1$, $a_i^2$, $b_i^1$ or $b_i^2$ for every $i\in \{1,\dots, n\}$.  By the same arguments as at the end of the proof of Claim~\ref{claim:SATtoDIM}, the vertices $c_j^1$ and $c_j^3$ must have distinct multiset representations due to some $a_i^h$ or $b_i^h$. Thus, the truth assignment where $x_i = \texttt{true}$ if $a_i^1$ or $a_i^2$ is in $S$, and $x_i = \texttt{false}$ if $b_i^1$ or $b_i^2$ is in $S$ for all $i \in \{1, \ldots , n\}$ satisfies$\;F$.
	\end{proofofclaim}

	This completes the reduction from 3-SAT to the \textsc{Multiset Dimension} problem.
\end{proof}

\section{The king grid}\label{sec:king-grid}

Based on the NP-completeness reduction made in the proof of Theorem \ref{th:complexity}, it is then desirable to consider the multiset dimension (or ID-number) of some non-trivial families of graphs. In connection with this, in this section we consider the strong product of a path $P_n$ with itself, also known as the king grid.

\medskip
The graph $G \boxtimes H$ is the \emph{strong product} of $G$ and $H$. The vertex set of $G \boxtimes H$ is the set $V(G) \times V(H) = \{ (u,v) \, | \, u \in V(G), v \in V(H) \}$. Two vertices $(g,h), (g',h') \in V(G \boxtimes H)$ are adjacent if $g = g'$ and $h$ is adjacent to $h'$ in $H$; or $g$ is adjacent to $g'$ in $G$ and $h = h'$; or $g$ is adjacent to $g'$ in $G$ and $h$ is adjacent to $h'$ in $H$. We write $V(P_n) = \{1,\ldots, n\}$ so that the vertices of $P_n \boxtimes P_n$ correspond to the coordinates of the $\mathbb{Z}^2$ lattice. Moreover, for a vertex $(i,j)\in V(P_n \boxtimes P_n)$ and an integer $q\ge 1$, by $D_q(i,j)$ we mean the set of vertices in $V(P_n \boxtimes P_n)$ at distance $q$ from $(i,j)$. Notice that such set $D_q(i,j)$ represents a kind of (not necessarily whole) ``border'' of a subgraph of $P_n \boxtimes P_n$ isomorphic to the strong product of two paths. See Figure \ref{fig:P6P6-sets-D} for two representative examples.

\begin{figure}[ht]
\vspace*{-1mm}
	\centering
	\begin{tikzpicture}[scale=0.85]
		\def\size{\footnotesize}
		\draw \foreach \x in {1,...,5} \foreach \y in {1,...,6} {
			(\x,\y) -- (\x+1,\y)
		};
		\draw \foreach \x in {1,...,6} \foreach \y in {1,...,5} {
			(\x,\y) -- (\x,\y+1)
		};
		\draw \foreach \x in {1,...,5} \foreach \y in {1,...,5} {
			(\x,\y) -- (\x+1,\y+1)
			(\x,\y+1) -- (\x+1,\y)
		};
		\draw \foreach \x in {1,...,6} \foreach \y in {1,...,6}{
			node[circle, draw, fill=white, inner sep=0pt, minimum width=7pt] (\x\y) at (\x,\y) {}
		};
		\draw \foreach \x in {(14),(24),(34),(44),(41),(42),(43)}{
			\x node[circle, draw, fill=black, inner sep=0pt, minimum width=7pt] {}
		};
        \draw \foreach \x in {(22)}{
			\x node[circle, draw, fill=red, inner sep=0pt, minimum width=7pt] {}
		};
	\end{tikzpicture}
\hspace*{1.6cm}
	\begin{tikzpicture}[scale=0.85]
		\def\size{\footnotesize}
		\draw \foreach \x in {1,...,5} \foreach \y in {1,...,6} {
			(\x,\y) -- (\x+1,\y)
		};
		\draw \foreach \x in {1,...,6} \foreach \y in {1,...,5} {
			(\x,\y) -- (\x,\y+1)
		};
		\draw \foreach \x in {1,...,5} \foreach \y in {1,...,5} {
			(\x,\y) -- (\x+1,\y+1)
			(\x,\y+1) -- (\x+1,\y)
		};
		\draw \foreach \x in {1,...,6} \foreach \y in {1,...,6}{
			node[circle, draw, fill=white, inner sep=0pt, minimum width=7pt] (\x\y) at (\x,\y) {}
		};
		\draw \foreach \x in {(11),(12),(13),(14),(15),(51),(52),(53),(54),(55),(21),(31),(41),(25),(35),(45)}{
			\x node[circle, draw, fill=black, inner sep=0pt, minimum width=7pt] {}
		};
        \draw \foreach \x in {(33)}{
			\x node[circle, draw, fill=red, inner sep=0pt, minimum width=7pt] {}
		};
	\end{tikzpicture}
	\caption{The graph $P_6 \boxtimes P_6$ with the sets $D_2(2,2)$ and $D_2(3,3)$, respectively illustrated in black. The vertices $(2,2)$ and $(3,3)$ appear in red color.}\label{fig:P6P6-sets-D}\vspace*{-4mm}
\end{figure}
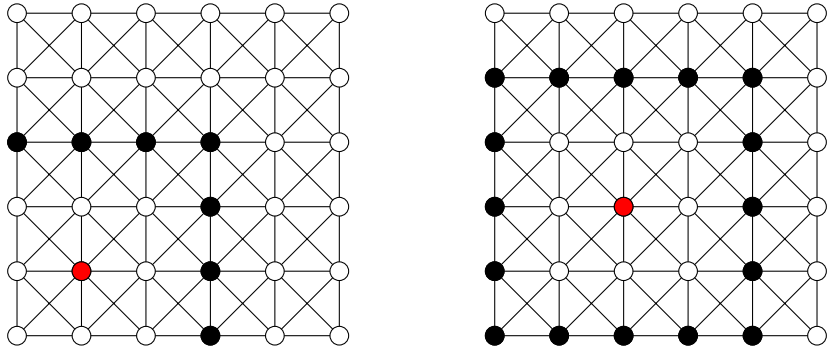

\begin{remark}
	We have $\dim_{ms} (P_2 \boxtimes P_2) = \dim_{ms} (P_3 \boxtimes P_3) = \infty$, since both of these graphs are of diameter at most 2, and such graphs have no multiset resolving sets~\cite{rino-2017}. As for the case $n=4$, the set $\{(1,1),(2,1),(4,1),(1,3),(2,3),(2,4)\}$ is a multiset resolving set of $P_4 \boxtimes P_4$. We have checked with an exhaustive computer search that no smaller multiset resolving sets exist for this graph, and thus $\dim_{ms} (P_4 \boxtimes P_4) = 6$
\end{remark}

We begin our exposition of the larger king grids with the two smallest cases, and further on proceed with the general case.\vspace*{-1mm}
\begin{proposition}\label{prop:P5P5}
We have $\dim_{ms} (P_5 \boxtimes P_5) = \dim_{ms} (P_6 \boxtimes P_6)= 4$.\vspace*{-2mm}
\end{proposition}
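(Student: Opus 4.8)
The plan is to establish the equality $\dim_{ms}(P_5 \boxtimes P_5) = \dim_{ms}(P_6 \boxtimes P_6) = 4$ by combining an explicit construction (to obtain the upper bound $\le 4$) with a lower-bound argument (to rule out resolving sets of size $3$ or fewer). For the upper bound, I would exhibit a concrete set $W$ of four vertices in each grid and verify that all multiset representations are pairwise distinct. A natural first attempt is to place the landmarks near a corner or along two adjacent sides, so that the distances break the symmetry of the grid. Because in $P_n \boxtimes P_n$ the distance from $(i,j)$ to $(k,\ell)$ is the Chebyshev distance $\max(|i-k|,|j-\ell|)$, the multiset $\msrepr((i,j) \mid W)$ is easy to compute for any candidate $W$, which makes the verification essentially a finite (though tedious) check over the $25$ (resp.\ $36$) vertices. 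I expect the cleanest presentation to give one such $W$ for each $n$ and simply assert that the finite computation confirms pairwise distinctness, possibly displaying a representative subset of the codes to convince the reader.

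For the lower bound, I would argue that no set of three vertices can resolve the grid. The key obstacle here is the Chebyshev geometry: the set $D_q(i,j)$ of vertices at distance exactly $q$ is a square ``ring'' (a border of an axis-aligned square), as the excerpt's Figure already illustrates. The plan is to exploit the abundance of symmetry and the coarseness of multiset information. With only three landmarks, each vertex receives a multiset of three distance values drawn from a small range ($0$ to $n-1$), so a counting or pigeonhole argument should show that there are too few achievable multisets, or that some pair of symmetric vertices is forced to share a representation. Concretely, I would consider the reflections and rotations of the square grid about its center and the diagonals; for a set $W$ of size $3$ to resolve, it must in particular distinguish each vertex from its images under any symmetry fixing the distance profile to $W$. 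Showing that three points cannot simultaneously break all the relevant symmetries — or directly exhibiting, for an arbitrary triple, a surviving pair with identical distance multisets — is the heart of the argument.

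The main obstacle I anticipate is the lower bound rather than the upper bound, because the upper bound is a finite verification once a good $W$ is chosen, whereas the lower bound must hold for \emph{every} $3$-element set. The difficulty is that an unstructured triple has no exploitable symmetry, so a purely symmetry-based argument will not cover all cases. I would therefore fall back on a case analysis organized by the geometric configuration of the three chosen vertices (e.g.\ how many lie on a diagonal, whether they are collinear in the Chebyshev sense, their relative positions modulo the grid's reflections), and in each configuration produce an explicit unresolved pair. Since $n=5$ and $n=6$ are small, an alternative and honest route — and likely the one the authors take — is to invoke an exhaustive computer search to certify that $\dim_{ms} \ge 4$, exactly as was done for $P_4 \boxtimes P_4$ in the preceding Remark; I would state the construction explicitly for the $\le 4$ direction and cite the computational check for the $\ge 4$ direction, noting that the sharper general bounds proved later in Section~\ref{sec:king-grid} are consistent with this value.
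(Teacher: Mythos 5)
Your proposal is correct in outline, and its upper-bound half coincides with the paper's: the authors exhibit $S=\{(1,1),(2,1),(5,1),(1,5)\}$ for $P_5\boxtimes P_5$ and $S'=\{(2,1),(2,2),(6,1),(1,6)\}$ for $P_6\boxtimes P_6$ and verify pairwise distinctness of all multiset representations (displayed in a figure). Where you diverge is the lower bound. Your committed fallback is a full exhaustive search over all triples, and your primary plan (symmetry-breaking case analysis) you yourself concede will founder on unstructured triples; the paper does neither. Instead it develops precisely the pigeonhole idea you mention only in passing: sizes $1$ and $2$ are excluded by known results (only paths have multiset dimension $1$, and no graph has multiset dimension $2$), and for $P_5\boxtimes P_5$, if $|S''|=3$ then every vertex outside $S''$ gets a multiset of three values from $\{1,2,3,4\}$ (the diameter is $4$), of which there are only $\binom{6}{3}=20$, while $25-3=22$ vertices must be distinguished --- done, with no computation at all. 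For $P_6\boxtimes P_6$ the same count gives $\binom{7}{3}=35$ against $33$ vertices, so counting alone fails; the paper repairs it by ruling out every representation of the form $\msl 1,1,a\msr$: two landmarks adjacent to a common vertex force the third landmark onto the ring $D_a(\alpha,\beta)$, and a small computer check over these constrained configurations always yields an unresolved pair, which removes $5$ of the $35$ multisets and restores the pigeonhole ($30<33$). Your brute-force route would certainly also certify the bound (and is in the spirit of the paper's own treatment of $P_4\boxtimes P_4$), but the paper's hybrid buys a far smaller verification task and an explanation of \emph{why} three landmarks fail --- too few achievable multisets --- rather than a bare certificate; conversely, your approach has the merit of uniformity, treating $n=5$ and $n=6$ identically, whereas the paper's counting argument is sensitive to the diameter and only barely survives at $n=6$.
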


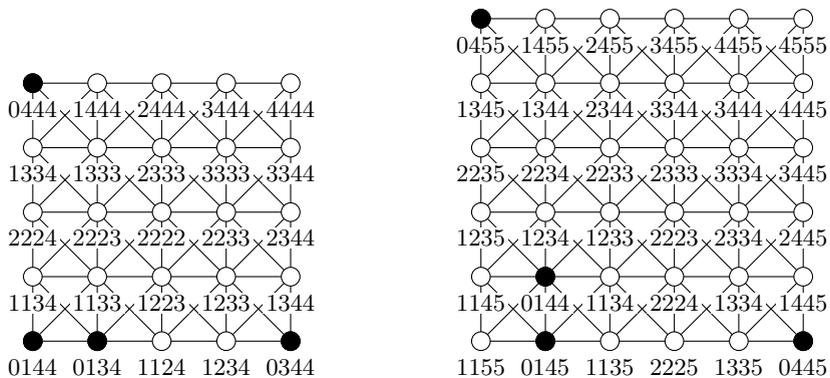
\begin{figure}[!b]
	\centering
	\begin{tikzpicture}[scale=0.85]
		\def\size{\footnotesize}
		\draw \foreach \x in {1,...,4} \foreach \y in {1,...,5} {
			(\x,\y) -- (\x+1,\y)
		};
		\draw \foreach \x in {1,...,5} \foreach \y in {1,...,4} {
			(\x,\y) -- (\x,\y+1)
		};
		\draw \foreach \x in {1,...,4} \foreach \y in {1,...,4} {
			(\x,\y) -- (\x+1,\y+1)
			(\x,\y+1) -- (\x+1,\y)
		};
		\draw \foreach \x in {1,...,5} \foreach \y in {1,...,5}{
			node[circle, draw, fill=white, inner sep=0pt, minimum width=7pt] (\x\y) at (\x,\y) {}
		};
		\draw \foreach \x in {(11),(21),(15),(51)}{
			\x node[circle, draw, fill=black, inner sep=0pt, minimum width=7pt] {}
		};
		\draw
		(11) node[rectangle, inner sep=1pt,fill=white,below=0.2 cm] {\size $0144$}
		(21) node[rectangle, inner sep=1pt,fill=white,below=0.2 cm] {\size $0134$}
		(31) node[rectangle, inner sep=1pt,fill=white,below=0.2 cm] {\size $1124$}
		(41) node[rectangle, inner sep=1pt,fill=white,below=0.2 cm] {\size $1234$}
		(51) node[rectangle, inner sep=1pt,fill=white,below=0.2 cm] {\size $0344$}
		(12) node[rectangle, inner sep=1pt,fill=white,below=0.2 cm] {\size $1134$}
		(22) node[rectangle, inner sep=1pt,fill=white,below=0.2 cm] {\size $1133$}
		(32) node[rectangle, inner sep=1pt,fill=white,below=0.2 cm] {\size $1223$}
		(42) node[rectangle, inner sep=1pt,fill=white,below=0.2 cm] {\size $1233$}
		(52) node[rectangle, inner sep=1pt,fill=white,below=0.2 cm] {\size $1344$}
		(13) node[rectangle, inner sep=1pt,fill=white,below=0.2 cm] {\size $2224$}
		(23) node[rectangle, inner sep=1pt,fill=white,below=0.2 cm] {\size $2223$}
		(33) node[rectangle, inner sep=1pt,fill=white,below=0.2 cm] {\size $2222$}
		(43) node[rectangle, inner sep=1pt,fill=white,below=0.2 cm] {\size $2233$}
		(53) node[rectangle, inner sep=1pt,fill=white,below=0.2 cm] {\size $2344$}
		(14) node[rectangle, inner sep=1pt,fill=white,below=0.2 cm] {\size $1334$}
		(24) node[rectangle, inner sep=1pt,fill=white,below=0.2 cm] {\size $1333$}
		(34) node[rectangle, inner sep=1pt,fill=white,below=0.2 cm] {\size $2333$}
		(44) node[rectangle, inner sep=1pt,fill=white,below=0.2 cm] {\size $3333$}
		(54) node[rectangle, inner sep=1pt,fill=white,below=0.2 cm] {\size $3344$}
		(15) node[rectangle, inner sep=1pt,fill=white,below=0.2 cm] {\size $0444$}
		(25) node[rectangle, inner sep=1pt,fill=white,below=0.2 cm] {\size $1444$}
		(35) node[rectangle, inner sep=1pt,fill=white,below=0.2 cm] {\size $2444$}
		(45) node[rectangle, inner sep=1pt,fill=white,below=0.2 cm] {\size $3444$}
		(55) node[rectangle, inner sep=1pt,fill=white,below=0.2 cm] {\size $4444$}
		;
	\end{tikzpicture}
\hspace*{1.6cm}
	\begin{tikzpicture}[scale=0.85]
		\def\size{\footnotesize}
		\draw \foreach \x in {1,...,5} \foreach \y in {1,...,6} {
			(\x,\y) -- (\x+1,\y)
		};
		\draw \foreach \x in {1,...,6} \foreach \y in {1,...,5} {
			(\x,\y) -- (\x,\y+1)
		};
		\draw \foreach \x in {1,...,5} \foreach \y in {1,...,5} {
			(\x,\y) -- (\x+1,\y+1)
			(\x,\y+1) -- (\x+1,\y)
		};
		\draw \foreach \x in {1,...,6} \foreach \y in {1,...,6}{
			node[circle, draw, fill=white, inner sep=0pt, minimum width=7pt] (\x\y) at (\x,\y) {}
		};
		\draw \foreach \x in {(21),(22),(16),(61)}{
			\x node[circle, draw, fill=black, inner sep=0pt, minimum width=7pt] {}
		};
		\draw
		(11) node[rectangle, inner sep=1pt,fill=white,below=0.2 cm] {\size $1155$}
		(21) node[rectangle, inner sep=1pt,fill=white,below=0.2 cm] {\size $0145$}
		(31) node[rectangle, inner sep=1pt,fill=white,below=0.2 cm] {\size $1135$}
		(41) node[rectangle, inner sep=1pt,fill=white,below=0.2 cm] {\size $2225$}
		(51) node[rectangle, inner sep=1pt,fill=white,below=0.2 cm] {\size $1335$}
        (61) node[rectangle, inner sep=1pt,fill=white,below=0.2 cm] {\size $0445$}
		(12) node[rectangle, inner sep=1pt,fill=white,below=0.2 cm] {\size $1145$}
		(22) node[rectangle, inner sep=1pt,fill=white,below=0.2 cm] {\size $0144$}
		(32) node[rectangle, inner sep=1pt,fill=white,below=0.2 cm] {\size $1134$}
		(42) node[rectangle, inner sep=1pt,fill=white,below=0.2 cm] {\size $2224$}
		(52) node[rectangle, inner sep=1pt,fill=white,below=0.2 cm] {\size $1334$}
		(62) node[rectangle, inner sep=1pt,fill=white,below=0.2 cm] {\size $1445$}
		(13) node[rectangle, inner sep=1pt,fill=white,below=0.2 cm] {\size $1235$}
		(23) node[rectangle, inner sep=1pt,fill=white,below=0.2 cm] {\size $1234$}
		(33) node[rectangle, inner sep=1pt,fill=white,below=0.2 cm] {\size $1233$}
		(43) node[rectangle, inner sep=1pt,fill=white,below=0.2 cm] {\size $2223$}
		(53) node[rectangle, inner sep=1pt,fill=white,below=0.2 cm] {\size $2334$}
		(63) node[rectangle, inner sep=1pt,fill=white,below=0.2 cm] {\size $2445$}
		(14) node[rectangle, inner sep=1pt,fill=white,below=0.2 cm] {\size $2235$}
		(24) node[rectangle, inner sep=1pt,fill=white,below=0.2 cm] {\size $2234$}
		(34) node[rectangle, inner sep=1pt,fill=white,below=0.2 cm] {\size $2233$}
		(44) node[rectangle, inner sep=1pt,fill=white,below=0.2 cm] {\size $2333$}
		(54) node[rectangle, inner sep=1pt,fill=white,below=0.2 cm] {\size $3334$}
		(64) node[rectangle, inner sep=1pt,fill=white,below=0.2 cm] {\size $3445$}
		(15) node[rectangle, inner sep=1pt,fill=white,below=0.2 cm] {\size $1345$}
		(25) node[rectangle, inner sep=1pt,fill=white,below=0.2 cm] {\size $1344$}
		(35) node[rectangle, inner sep=1pt,fill=white,below=0.2 cm] {\size $2344$}
		(45) node[rectangle, inner sep=1pt,fill=white,below=0.2 cm] {\size $3344$}
		(55) node[rectangle, inner sep=1pt,fill=white,below=0.2 cm] {\size $3444$}
		(65) node[rectangle, inner sep=1pt,fill=white,below=0.2 cm] {\size $4445$}
		(16) node[rectangle, inner sep=1pt,fill=white,below=0.2 cm] {\size $0455$}
		(26) node[rectangle, inner sep=1pt,fill=white,below=0.2 cm] {\size $1455$}
		(36) node[rectangle, inner sep=1pt,fill=white,below=0.2 cm] {\size $2455$}
		(46) node[rectangle, inner sep=1pt,fill=white,below=0.2 cm] {\size $3455$}
		(56) node[rectangle, inner sep=1pt,fill=white,below=0.2 cm] {\size $4455$}
		(66) node[rectangle, inner sep=1pt,fill=white,below=0.2 cm] {\size $4555$}
		;
	\end{tikzpicture}
	\caption{The graphs $P_5 \boxtimes P_5$ and $P_6 \boxtimes P_6$ with the sets $S$ and $S'$, respectively illustrated in black. The four digits below each vertex are the distances in the multiset representation sorted in ascending order.}\label{fig:P5P5}
\end{figure}

\begin{proof}
	It is known from \cite{rino-2017} and \cite{Chartrand-2021} that no graph has a multiset resolving set of cardinality $2$. Also, the only graph that has a multiset resolving set consisting of only one element is the path graph $P_n$. Thus, $\dim_{ms} (P_5 \boxtimes P_5) \geq 3$ and $\dim_{ms} (P_6 \boxtimes P_6) \geq 3$.
	We first prove that $\dim_{ms} (P_5 \boxtimes P_5) \neq 3$ by a simple counting argument. Suppose that $S$ is a multiset resolving set of $P_5 \boxtimes P_5$ such that $|S| = 3$. The maximum number of distinct multiset representations that do not contain $0$ is $\binom{6}{3} = 20$. Since $|V(G) \setminus S| = 22$, some vertices of $P_5 \boxtimes P_5$ have the same multiset representations, a contradiction. Thus, $\dim_{ms} (P_5 \boxtimes P_5)> 3$, and the first equality follows since the set $S = \{ (1,1), (2,1), (5,1), (1,5) \}$ is a multiset resolving set of $P_5 \boxtimes P_5$. The sets $S$ along with the multiset representations is illustrated in Figure~\ref{fig:P5P5}.

\medskip
On the other hand, observe that the set $S' = \{ (2,1),(2,2),(6,1),(1,6) \}$ is a multiset resolving set of $P_6 \boxtimes P_6$, as shown in Figure~\ref{fig:P5P5}, throughout the multiset representations of each vertex with respect to $S'$, being pairwise different. Thus $\dim_{ms} (P_6 \boxtimes P_6)\le 4$.

\smallskip
Now, in contrast to the case of $P_5 \boxtimes P_5$, to prove that $\dim_{ms} (P_6 \boxtimes P_6)\ne 3$, the counting argument used does not directly work. That is, if we suppose $S''$ is a multiset resolving set of $P_6 \boxtimes P_6$ such that $|S| = 3$, then the maximum number of distinct multiset representations that do not contain $0$ is $\binom{7}{3} = 35$, and $|V(P_6 \boxtimes P_6)\setminus S''|=33$. Thus, some extra arguments are required. To this end, assume there is a vertex $(\alpha,\beta)\in V(P_6 \boxtimes P_6)$ such that it has multiset representation $\msl 1,1,a\msr$ for some $a\in\{1,\dots,5\}$. Let $(i,j),(i',j')$ be two neighbors of $(\alpha,\beta)$ in $S''$ (note that $(i,j),(i',j')$ are at distance at most two). Hence, the third vertex $(i'',j'')$ of $S''$ must be in the set $D_a(\alpha,\beta)$, namely, $S''=\{(i,j),(i',j'),(i'',j'')\}$.

\medskip
It is now just a matter of checking all the possibilities that can occur between the two vertices $(i,j),(i',j')$ and the third vertex $(i'',j'')$, to observe that one can always find two vertices that have the same multiset representation with respect to $S''$. In order to avoid a lengthy and time consuming case analysis, we have simply checked this by computer. Thus, the multisets representations $\msl 1,1,a\msr$ are not possible for every $a\in\{1,\dots,5\}$ with respect to the set $S''$, and there are $5$ of them. But then this means we have a total amount of $\binom{7}{3} - 5= 30$ possible distinct multiset representations that do not contain $0$. However, there are $|V(P_6 \boxtimes P_6)\setminus S''|=33$ vertices, which is a contradiction. Therefore, $\dim_{ms} (P_6 \boxtimes P_6)\ne 3$, and the equality $\dim_{ms} (P_6 \boxtimes P_6)= 4$ follows.
\end{proof}

From now on, in order to facilitate the exposition, by a \emph{row} or a \emph{column} in $P_n \boxtimes P_n$ we mean the path induced by the vertices $(1,j),(2,j)\dots,(n,j)$ or $(i,1),(i,2)\dots,(i,n)$, respectively, for any $i,j\in \{1,\dots, n\}$.

\begin{theorem}
If $n \geq 7$ is an integer, then $3\le \dim_{ms}(P_n \boxtimes P_n) \le 4$.
\end{theorem}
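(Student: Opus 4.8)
The plan is to handle the two inequalities separately. The lower bound $\dim_{ms}(P_n \boxtimes P_n) \ge 3$ is immediate: as recalled in the proof of Proposition~\ref{prop:P5P5}, no graph admits a multiset resolving set of size $2$, and the only graphs with a multiset resolving set of size $1$ are paths; since $P_n \boxtimes P_n$ is not a path for $n \ge 2$, the bound follows at once. All of the real work therefore lies in the upper bound $\dim_{ms}(P_n \boxtimes P_n) \le 4$, for which I would exhibit an explicit four-element set and prove it multiset resolving for every $n \ge 7$ uniformly.

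First I would record the distance formula. In any strong product one has $d_{G \boxtimes H}((g,h),(g',h')) = \max\{d_G(g,g'), d_H(h,h')\}$, so in $P_n \boxtimes P_n$ the distance is the Chebyshev distance $d((i,j),(i',j')) = \max\{|i-i'|,|j-j'|\}$, and the diameter equals $n-1$. This clean formula is what makes a single argument possible across all $n$: each coordinate of a landmark enters a distance only through one absolute value inside a maximum, so distances are piecewise linear in $(i,j)$ on explicit polygonal regions.

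Guided by the sets used for $P_5 \boxtimes P_5$ and $P_6 \boxtimes P_6$, I would place two landmarks near one corner and two at the two far corners of the same side, taking $(n,1)$ and $(1,n)$ together with a near-corner pair, for instance $(1,1),(2,1)$ when $n$ is odd and $(2,1),(2,2)$ when $n$ is even (the parity split mirrors the two example configurations). Writing $v=(i,j)$, the far landmarks give $d(v,(n,1)) = \max\{n-i,\,j-1\}$ and $d(v,(1,n)) = \max\{i-1,\,n-j\}$; in the triangular region $i+j \le n+1$ these reduce to $n-i$ and $n-j$, whose sum and difference recover both $i+j$ and $i-j$, hence the pair $(i,j)$ exactly, and the complementary regions are treated symmetrically. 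The role of the two near-corner landmarks is then to pin down, inside the \emph{unordered} multiset, which recorded distance comes from which far landmark, and crucially to break the diagonal reflection $(i,j)\mapsto(j,i)$: this reflection swaps $(n,1)$ and $(1,n)$ but sends $(2,1)$ to $(1,2)\notin S$, so including $(2,1)$ destroys the symmetry that would otherwise leave the multiset of distances to four symmetrically placed points invariant.

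The verification I would carry out is a region-by-region reconstruction: partition the grid according to which argument attains each maximum, write the sorted multiset $\msl \cdot \msr$ in each region as an explicit function of $i$ and $j$, and check that this map is injective, reconciling the finitely many region boundaries where a maximum switches its dominant coordinate. The main obstacle is exactly the unordered nature of the multiset representation — because the labels identifying the four landmarks are discarded, one cannot simply read off the coordinates, and two distinct vertices lying in different regions can a priori yield the same sorted list. Controlling this collision uniformly in $n$ is what forces the careful, parity-dependent choice of the near-corner pair and the somewhat tedious boundary case analysis; I expect this bookkeeping, rather than any single inequality, to be the heart of the argument.
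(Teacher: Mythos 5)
Your lower-bound argument is fine and agrees with the paper's. The problem is the upper bound: what you have written is a plan, not a proof. The entire injectivity verification (your ``bookkeeping'') is deferred, and that verification is precisely the mathematical content of the theorem. Worse, the heuristic you offer in its place --- that the two near-corner landmarks break the diagonal reflection $(i,j)\mapsto(j,i)$ and thereby let one disentangle the unordered multiset --- is demonstrably insufficient. Concretely, for even $n$ the set $\{(1,1),(2,1),(n,1),(1,n)\}$ also breaks the diagonal symmetry (it sends $(2,1)$ to $(1,2)\notin S$), yet it is \emph{not} a multiset resolving set of $P_n\boxtimes P_n$: the vertices $u=(n/2+1,1)$ and $v=(1,n/2)$, which are not diagonal reflections of one another, both have multiset representation $\msl n/2-1,\,n/2-1,\,n/2,\,n-1\msr$ (for $n=8$, $u=(5,1)$ and $v=(1,4)$ both get $\msl 3,3,4,7\msr$). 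This is exactly why the paper uses a different landmark set in the even case. Your proposed even-$n$ set $\{(2,1),(2,2),(n,1),(1,n)\}$ differs from the paper's even-$n$ set, and nothing in your write-up rules out an analogous coincidence for it: collisions can occur between vertices unrelated by any symmetry of the landmark configuration, so they can only be excluded by actually carrying out the region-by-region analysis you postpone.

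It is also worth noting how the paper avoids the monolithic verification you are implicitly committing to. It proceeds by induction: for odd $n$ it assumes $S_{n-2}=\{(1,1),(2,1),(n-2,1),(1,n-2)\}$ resolves $P_{n-2}\boxtimes P_{n-2}$ (base case $n=7$ via Proposition~\ref{prop:P5P5}) and observes that for every interior vertex the representation with respect to $S_n$ is the representation with respect to $S_{n-2}$ shifted by $+1$, so distinctness there is inherited and only the border vertices require explicit computation; for even $n$ it steps from $n-1$ with a set placed so that interior representations are literally unchanged. If you keep your direct approach, you must write out the sorted multisets in each Chebyshev region and check all cross-region coincidences (including pairs of the type exhibited above) uniformly in $n$, with a parity analysis of the kind the paper performs for its border vertices; otherwise, adopting the inductive scheme reduces the unbounded amount of checking to the boundary and is the more economical route.
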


\begin{proof}
The lower bound follows from the fact that any graph different from a path has multiset dimension at least $3$, or by using inequality \eqref{eq-dim-mdim}, since $\dim(P_n \boxtimes P_n)=3$ (see \cite{Barragan,Rodriguez}). To show the upper bound we use an induction procedure that separately works for odd and even values of $n$.

\medskip
\noindent
	\textbf{Case 1:} $n \geq 7$ is odd. Assume that the set $S_{n-2} = \{(1,1),(2,1),(n-2,1),(1,n-2)\}$ is a multiset resolving set of $P_{n-2} \boxtimes P_{n-2}$. This is true for $n=7$ according to Proposition~\ref{prop:P5P5}, which shows the base case of the induction process. We will show that the set $S_n = \{(1,1),(2,1),(n,1),(1,n)\}$ is a multiset resolving set of $P_n \boxtimes P_n$.

\medskip	
	Denote $F = \{ (x,y) \in V(P_n \boxtimes P_n) \, | \, x \in \{1,n\} \text{ or } y \in \{1,n\} \}$ and $I = V(P_n \boxtimes P_n) \setminus F$ (see Figure~\ref{fig:nodd}). Observe that each vertex in $F$ has $n-1$ in its multiset representation with respect to $S_n$, whereas the vertices in $I$ do not. Thus, vertices from $F$ and $I$ clearly have multiset representations distinct from one another.

\begin{figure}[!h]
		\centering
		\begin{subfigure}[b]{0.45\linewidth}
			\centering
			\begin{tikzpicture}[scale=.84]
				\def\size{\footnotesize}
				\def\n{7}
				\def\nn{6}
				\draw \foreach \x in {1,...,\nn} \foreach \y in {1,...,\n} {
					(\x,\y) -- (\x+1,\y)
				};
				\draw \foreach \x in {1,...,\n} \foreach \y in {1,...,\nn} {
					(\x,\y) -- (\x,\y+1)
				};
				\draw \foreach \x in {1,...,\nn} \foreach \y in {1,...,\nn} {
					(\x,\y) -- (\x+1,\y+1)
					(\x,\y+1) -- (\x+1,\y)
				};
				\draw \foreach \x in {1,...,\n} \foreach \y in {1,...,\n}{
					node[circle, draw, fill=white, inner sep=0pt, minimum width=7pt] (\x\y) at (\x,\y) {}
				};
				\draw \foreach \x in {(11),(21),(1\n),(\n1)}{
					\x node[circle, draw, fill=black, inner sep=0pt, minimum width=7pt] {}
				};
				\draw \foreach \x in {(22),(32),(26),(62)}{
					\x node[circle, draw, fill=gray!60, inner sep=0pt, minimum width=7pt] {}
				};
				\draw[thick,dashed] (22)+(-.3,-.3) -| +(4.3,4.3) -| cycle;
			\end{tikzpicture}
			\caption{$n=7$}\label{fig:nodd}
		\end{subfigure}
		\hfill
		\begin{subfigure}[b]{0.45\linewidth}
			\centering
	\hspace*{-12mm}\begin{tikzpicture}[scale=.84]
				\def\size{\footnotesize}
				\def\n{8}
				\def\nn{7}
				\draw \foreach \x in {1,...,\nn} \foreach \y in {1,...,\n} {
					(\x,\y) -- (\x+1,\y)
				};
				\draw \foreach \x in {1,...,\n} \foreach \y in {1,...,\nn} {
					(\x,\y) -- (\x,\y+1)
				};
				\draw \foreach \x in {1,...,\nn} \foreach \y in {1,...,\nn} {
					(\x,\y) -- (\x+1,\y+1)
					(\x,\y+1) -- (\x+1,\y)
				};
				\draw \foreach \x in {1,...,\n} \foreach \y in {1,...,\n}{
					node[circle, draw, fill=white, inner sep=0pt, minimum width=7pt] (\x\y) at (\x,\y) {}
				};
				\draw \foreach \x in {(22),(32),(28),(82)}{
					\x node[circle, draw, fill=black, inner sep=0pt, minimum width=7pt] {}
				};
				\draw[thick,dashed] (22)+(-.3,-.3) -| +(6.3,6.3) -| cycle;
			\end{tikzpicture}
			\caption{$n=8$}\label{fig:neven}
		\end{subfigure}	\vspace*{-3mm}
		\caption{Vertices within the dashed line are the vertices in $I$, and the vertices outside the dashed line are the vertices in $F$. The black vertices are the elements of the set $S_n$.}\vspace*{-2mm}
	\end{figure}
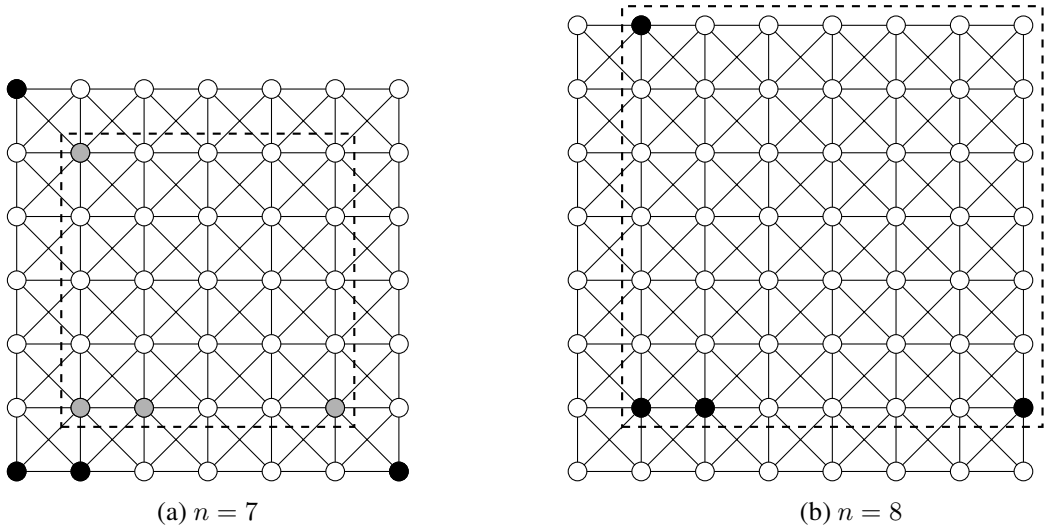

\medskip	
	Consider the vertices in $I$. The graph $P_n \boxtimes P_n$ can be viewed as a graph constructed from $P_{n-2} \boxtimes P_{n-2}$ by adding an additional row or column of vertices to all four sides of the graph. The set $S_n$ can then be obtained by moving the elements of $S_{n-2}$ diagonally away from the middle. Thus, for each $(v_1,v_2) \in I \setminus \{(2,2),(3,2)\}$, we have
	\[ \msrepr((v_1,v_2) | S_n) = \msrepr((v_1-1,v_2-1) | S_{n-2}) + 1. \]
	Since $S_{n-2}$ is a multiset resolving set of $P_{n-2} \boxtimes P_{n-2}$, all vertices in $I \setminus \{(2,2),(3,2)\}$ have distinct multiset representations with respect to $S_n$ in $P_n \boxtimes P_n$. Since $\msrepr((2,2) | S_n) = \msl 1,1,n-2,n-2 \msr$ and $\msrepr((3,2) | S_n) = \msl 1,2,n-3,n-2 \msr$, the vertices $(2,2)$ and $(3,2)$ have distinct multiset representations with respect to each other. Moreover, the vertex $(2,2)$ is the only vertex in $I$ that is adjacent to two elements of the set $S_2$. The vertex $(3,2)$ is adjacent to one element of the set $S_n$, and so are two other vertices in $I$. However, neither of these vertices has the distance $2$ in their multiset representations, which the vertex $(3,2)$ does have. Thus, all vertices in $I$ have pairwise distinct multiset representations with respect to $S_n$.
	
\medskip
	Consider then the vertices in $F$. The multiset representations of these vertices are as follows.
	\begin{itemize}
		\item For vertices on the top row, i.e. $(i,n)$ where $i \in \{1, \ldots , n \}$, we have $\msrepr((i,n) | S_n) = \msl i-1, n-1,n-1,n-1 \msr$.
		\item For vertices on the right column, excluding the top corner, i.e. $(n,i)$ where $i \in \{1, \ldots, n-1\}$, we have $\msrepr((n,i) | S_n) = \msl i-1,n-2,n-1,n-1 \msr$.
		\item For vertices on the left column, excluding top and bottom corners, i.e. $(1,i)$ where $i \in \{2, \ldots , n-1 \}$, we have $\msrepr((1,i) | S_n) = \msl i-1,i-1,n-i-2,n-1 \msr$.
		\item For vertices on the bottom row, excluding both corners, i.e. $(i,1)$, where $i \in \{2, \ldots , n-1 \}$, we have $\msrepr((i,1) | S_n) = \msl i-1,i-2,n-i-2,n-1 \msr$.
		\item For the bottom left corner $(1,1)$, we have $\msrepr((1,1) | S_n) = \msl 0,1,n-1,n-1 \msr$.\vspace*{1mm}
	\end{itemize}
	Since the vertices in the top row are the only ones to have (at least) three $(n-1)$'s in their multiset representations, the vertices on the top row have distinct representations from the other vertices of $F$. Moreover, the multiset representations of the vertices on the top row are pairwise distinct. Similarly, from the vertices left to be considered, the vertices in the right column are the only vertices that have exactly two $(n-1)$'s in their representations (other than $(1,1)$, but that has a distinct multiset representation due to the 0). Thus it is clear that the vertices in the right column have distinct multiset representations with respect to each other and other vertices in $F$. The vertices in the left column have pairwise distinct representations with respect to each other, and the same holds also for vertices in the bottom row. The only thing left to show is that two vertices, one in the left column and the other in the bottom row, cannot have the same multiset representations.
	
\medskip
	To that end, suppose to the contrary that $(i,1)$ and $(1,j)$ have the same multiset representation for some $i,j \in \{2, \ldots , n-1 \}$. Since $j-1$ appears twice in the multiset representation of $(1,j)$, we must have $i-1 = n-i-2$ or $i-2 = n-i-2$. Since $n$ is odd, we have $i-1 = n-i-2$ and $i-1 = \frac{n-3}{2}$. Now, $\msrepr((i,1) | S_n) = \msl \frac{n-3}{2}, \frac{n-3}{2}, \frac{n-3}{2}-1,n-1 \msr$. This implies that $j-1 = \frac{n-3}{2}$, but now $\msrepr((1,j) | S_n) = \msl \frac{n-3}{2},\frac{n-3}{2},\frac{n-3}{2},n-1 \msr \neq \msrepr((i,1)|S_n)$, a contradiction. Thus, all vertices of $F$ have pairwise distinct multiset representations with respect to $S_n$.
	
As a consequence, we obtain that $S_n$ is a multiset resolving set as claimed, and so, $\dim_{ms}(P_n \boxtimes P_n) \le 4$ in this case.
	
\medskip
\noindent
\textbf{Case 2:} $n\geq 8$ is even. Assume that the set $S_{n-1} = \{(1,1),(2,1),(n-1,1),(1,n-1)\}$ is a multiset resolving set of $P_{n-1} \boxtimes P_{n-1}$. If $n=8$, then by the Case 1 we know that $S_{7}$ is a multiset resolving set of $P_{7} \boxtimes P_{7}$, which shows the base case. We will show that the set $S_n = \{(2,2),(2,3),(n,2),(2,n)\}$ is a multiset resolving set of $P_n \boxtimes P_n$.

\medskip	
	We now denote $F = \{ (x,y) \in V(P_n \boxtimes P_n) \, | \, x = 1 \text{ or } y = 1 \}$ and $I = V(P_n \boxtimes P_n) \setminus F$ (see Figure~\ref{fig:neven}). The vertices in $I$ clearly have pairwise distinct multiset representations as $\msrepr(v | S_n) = \msrepr(v | S_{n-1})$ for all $v \in I$, and $S_{n-1}$ is a multiset resolving set of $P_{n-1} \boxtimes P_{n-1}$. Also, all the vertices in $F$ have $n-1$ in their multiset representations with respect to $S_n$, whereas vertices in $I$ do not. Thus, the multiset representation of a vertex in $F$ is always distinct from that of a vertex of $I$.

\medskip	
	We will show next that the multiset representations of vertices of $F$ are pairwise distinct.
	To that end, we first consider the vertices that are adjacent to some element of $S_n$. Their multiset representations are the following:
	\begin{align*}
		\msrepr((1,1)|S_n) &= \msl 1,2, n-1,n-1 \msr, & \msrepr((1,n)|S_n) &= \msl 1,n-2,n-2,n-1 \msr, \\
		\msrepr((1,2)|S_n) &= \msl 1,2, n-2,n-1 \msr, & \msrepr((1,n-1)|S_n) &= \msl 1,n-3,n-3,n-1 \msr, \\
		\msrepr((1,3)|S_n) &= \msl 1,2, n-3,n-1 \msr, & \msrepr((n,1)|S_n) &= \msl 1,n-3,n-2,n-1 \msr, \\
		\msrepr((4,1)|S_n) &= \msl 1,2, n-4,n-1 \msr, & \msrepr((n-1,1)|S_n) &= \msl 1,n-4,n-3,n-1 \msr, \\
		\msrepr((2,1)|S_n) &= \msl 1,1, n-2,n-1 \msr, &
		\msrepr((3,1)|S_n) &= \msl 1,1, n-3,n-1 \msr.
	\end{align*}
	Since $n \geq 8$, we have $n-4 \neq 2$ and all these multiset representations are pairwise distinct. Moreover, since these vertices are the only vertices in $F$ that have the distance 1 in their multiset representations, these multiset representations are distinct from the multiset representation of other vertices in $F$ as well. Let us then consider the rest of the vertices in $F$. The vertices $(i,1)$ where $i \in \{5,\ldots,n-2\}$ have multiset representations of the form $\msl i-2,i-3,n-i-2,n-1 \msr$, and these representations are  pairwise distinct. Similarly, the vertices $(1,j)$ where $j \in \{4, \ldots , n-2\}$ have multiset representations of the form $\msl j-2, j-2,n-j-2,n-1 \msr$, and these representations are clearly pairwise distinct.
	Suppose then that $(i,1)$ and $(1,j)$ have the same multiset representation for some $i \in \{5, \ldots , n-2\}$ and $j \in \{4,\ldots,n-2\}$. As in the proof for odd $n$, the distance $j-2$ appears twice in the multiset representation of $(1,j)$. This implies that $i-2 = n-i-2$ or $i-3=n-i-2$. As $n$ is even, we have $i-2 = n-i-2$, and thus $i-2 = \frac{n-4}{2}$. Now, $\msrepr((i,1)|S_n) = \msl \frac{n-4}{2},\frac{n-4}{2},\frac{n-4}{2}-1,n-1 \msr$. This implies that $j-2 = \frac{n-4}{2}$. However, now $\msrepr((1,j)|S_n) = \msl \frac{n-4}{2},\frac{n-4}{2},\frac{n-4}{2},n-1 \msr \neq \msrepr((i,1)|S_n)$, a contradiction.
	Thus, all vertices in $F$ have pairwise distinct multiset representations with respect to~$S_n$.

Therefore, we again conclude that $S_n$ is a multiset resolving set in this situation, and so, $\dim_{ms}(P_n \boxtimes P_n) \le 4$ follows as well, which completes the proof.
\end{proof}

\section{Strong products involving a complete graph}
\label{sec:strong-prod}

Several graphs with infinite multiset dimension (or equivalently that are not ID-graphs) are already known from the seminal works \cite{Chartrand-2021,rino-2017}, in their corresponding terminologies. For instance, it is known from these mentioned works that for a given graph $G$ of diameter two, $\dim_{ms}(G\boxtimes K_n)<\infty$ if and only $G$ is $P_3$. From this result we can identify a lot of interesting and non-trivial families of graphs like for instance the join of two non-complete graphs, the Cartesian or direct product of two complete graphs, and the Kneser graph $K(n, 2)$, among others, having infinite multiset dimension.

\medskip
We next describe some other graphs with diameter larger than two that have infinite multiset dimension. Specifically, we consider the case of strong product graphs $G\boxtimes H$ when $H$ is a complete graph. We first need the following definition from \cite{Klavzar-2023}. A graph $G$ is called a \emph{multiset distance irregular graph} if for any two vertices $x,y\in V(G)$ it follows that $\msrepr(x|V(G))\ne \msrepr(y|V(G))$. A example of a multiset distance irregular graph is for instance a tree obtained from a star $S_n$ with leaves $v_0,\dots v_{n-1}$ ($n\ge 3$) and center $x$, by subdividing $i$ times the edge $xv_i$ for every $i\in\{0\dots,n-1\}$.

\begin{theorem}
\label{th:G-K_2}
Let $G$ be a graph and let $n\ge 2$ be an integer. We have $\dim_{ms}(G\boxtimes K_n)<\infty$ if and only if $n=2$ and $G$ is a multiset distance irregular graph.
\end{theorem}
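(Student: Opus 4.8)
The plan is to prove the two directions of the biconditional separately, and within each direction to exploit the structure of the strong product $G\boxtimes K_n$, where distances behave simply: for vertices $(g,h),(g',h')$ with $g\ne g'$, the distance is exactly $d_G(g,g')$, while within a single ``fiber'' $\{g\}\times V(K_n)$ every two distinct vertices are at distance $1$. The key consequence is that any two vertices $(g,h),(g,h')$ in the same fiber are \emph{twins} (they have the same neighbors outside the fiber, namely all of $N_G(g)\times V(K_n)$, plus each other inside it). As noted in the excerpt, each multiset resolving set must contain at least one of any pair of twins.

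First I would handle the case $n\ge 3$, showing $\dim_{ms}(G\boxtimes K_n)=\infty$ regardless of $G$. The idea is that a single fiber $\{g\}\times V(K_n)$ contains $n\ge 3$ mutually twin vertices, so a multiset resolving set $S$ must contain at least $n-1$ of them. But I claim that even then two of the fiber vertices remain unresolved. Indeed, take two vertices $(g,h_1),(g,h_2)$ both lying \emph{outside} $S$ (possible to arrange the argument around an appropriate fiber, or more cleanly: any two vertices of the same fiber that are both in $S$ still have identical multiset representations because distances from either to any third vertex $(g',h')$ depend only on $g,g'$, never on the $K_n$-coordinate). Concretely, for any $w=(g',h')$ with $g'\ne g$ we have $d(w,(g,h_1))=d(w,(g,h_2))=d_G(g',g)$, and for $w$ in the same fiber the distance is $1$ to both (or $0$ to exactly the one equal to $w$). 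Hence the multisets $\msrepr((g,h_1)\mid S)$ and $\msrepr((g,h_2)\mid S)$ can differ only through the $0$-entry, i.e. only if exactly one of $(g,h_1),(g,h_2)$ lies in $S$; with $n\ge 3$ fiber vertices and only distance-$1$ separations available inside the fiber, a counting/pigeonhole argument shows two same-fiber vertices must share a representation. This forces $\dim_{ms}=\infty$, giving the ``only if'' conclusion that $n=2$.

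Next, for $n=2$, I would show $\dim_{ms}(G\boxtimes K_2)<\infty$ if and only if $G$ is multiset distance irregular. For the forward direction, suppose $G$ is \emph{not} multiset distance irregular, so there exist $x,y\in V(G)$ with $\msrepr(x\mid V(G))=\msrepr(y\mid V(G))$. I would argue that for \emph{any} $S\subseteq V(G\boxtimes K_2)$ the two vertices $(x,h)$ and $(y,h)$ (for a suitable choice, or by symmetry over the two $K_2$-coordinates) receive equal multisets, because the multiset of distances from $(x,h)$ to all of $S$ is determined by the multiset of $G$-distances from $x$ together with the $K_2$-coordinates, and the hypothesis makes these coincide for $x$ and $y$. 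For the converse, assuming $G$ is multiset distance irregular, the natural candidate resolving set is $S=V(G)\times\{1\}$ (one full copy of $G$, hitting one vertex of every twin pair). Here the crucial computation is that $\msrepr((g,h)\mid S)$ records, for each $g'$, the value $d_G(g,g')$ if $h=1$-side matches, adjusted by the $K_2$-coordinate; I would verify that the two vertices of each fiber are separated by the $0$-versus-$1$ entry, and that vertices in different fibers $g\ne g'$ are separated precisely because $\msrepr(g\mid V(G))\ne\msrepr(g'\mid V(G))$ in $G$ lifts to distinct multisets in the product.

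The main obstacle I anticipate is the $n\ge 3$ direction: making the pigeonhole/counting argument fully rigorous, since one must show that no clever choice of $S$ (including putting $n-1$ vertices of a fiber into $S$) can resolve all same-fiber pairs. The cleanest route is the structural observation that same-fiber vertices differ in their multiset representation \emph{only} in the $0$-entry, so resolving a fiber of size $n$ requires all but at most one of its vertices to carry a distinct $0$-pattern — impossible once $n\ge 3$ because the only distinguishing feature is membership in $S$, a binary attribute, which cannot separate three mutually-twin vertices. I would state this as a short lemma about twin classes of size $\ge 3$ forcing infinite multiset dimension, then apply it to every fiber of $G\boxtimes K_n$ for $n\ge 3$.
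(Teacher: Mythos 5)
Your case $n\ge 3$ and your sufficiency argument for $n=2$ are correct and essentially identical to the paper's: the fiber vertices are twins, so any multiset resolving set contains at least $n-1$ of each fiber, and two same-fiber vertices that are both in $S$ (or both outside $S$) have equal multiset representations, which kills $n\ge 3$; and when $G$ is multiset distance irregular, the set $V(G)\times\{1\}$ resolves $G\boxtimes K_2$ exactly as you outline (this is the paper's ``one whole copy of $G$'').

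The genuine gap is in the remaining direction, $\dim_{ms}(G\boxtimes K_2)<\infty \Rightarrow G$ multiset distance irregular. Your central claim there, that for \emph{any} $S\subseteq V(G\boxtimes K_2)$ the vertices $(x,h)$ and $(y,h)$ receive equal multisets, is false as stated, and the justification offered is the problem: the multiset of distances from $(x,h)$ to $S$ is determined by the $G$-distances from $x$ to the \emph{projections of the elements of} $S$, not by the multiset $\msrepr(x|V(G))$ taken over all of $V(G)$, and the hypothesis $\msrepr(x|V(G))=\msrepr(y|V(G))$ controls nothing about distances to an arbitrary subset. Concretely, in $G=P_4$ with vertices $1,2,3,4$, the vertices $x=2$ and $y=3$ satisfy $\msrepr(x|V(G))=\msrepr(y|V(G))=\msl 0,1,1,2\msr$, yet for $S=\{(1,1)\}$ the representations of $(2,h)$ and $(3,h)$ are $\msl 1\msr$ and $\msl 2\msr$, which are distinct. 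What is missing is the intermediate structural step that the paper proves first: any multiset resolving set $S$ of $G\boxtimes K_2$ must contain \emph{exactly} one vertex from each fiber (at least one because the two fiber vertices are twins; at most one because two twins both in $S$ have equal representations). Only then does the hypothesis bite: such an $S$ projects bijectively onto $V(G)$, so the two vertices \emph{not} in $S$ lying in the fibers of $x$ and $y$ have representations obtained from $\msrepr(x|V(G))$ and $\msrepr(y|V(G))$ by replacing the single $0$ with a $1$, and these coincide, a contradiction. You already have both twin facts in your $n\ge 3$ discussion; the repair is simply to deploy them here instead of the false ``for any $S$'' determination claim, after which your argument matches the paper's.
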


\begin{proof}
Notice that the $n$ vertices of each copy of $K_n$ are twins, that is, they have the same closed neighborhood. Hence, if $S$ is a multiset resolving set of $G\boxtimes K_n$, then at least $n-1$ vertices from each copy of $K_n$ must be in $S$. If $n\ge 3$, then from each copy of $K_n$ there are at least two vertices in $S$. But then, these two vertices have the same multiset representation, which is not possible. Consequently, we deduce that $G\boxtimes K_n$ does not have multiset resolving sets when $n\ge 3$ and so, $\dim_{ms}(G\boxtimes K_n)=\infty$ in this case.

Assume next that $n=2$. By the same reasons as above, at least one vertex from each copy of $K_2$ must be in $S$. Thus $\dim_{ms}(G\boxtimes K_2)\ge |V(G)|$. Now, assume $G$ is multiset distance irregular, and consider a set $X$ of vertices containing one whole copy of $G$. Since the multisets representations of any two vertices $x,y\in V(G)$ satisfy that $\msrepr(x|V(G))\ne \msrepr(y|V(G))$, we deduce that $X$ is a multiset resolving set, and so $\dim_{ms}(G\boxtimes K_n)\le |V(G)|$, which gives the equality $\dim_{ms}(G\boxtimes K_n)= |V(G)|< \infty$.

\medskip
Assume now that $\dim_{ms}(G\boxtimes K_n)<\infty$. Clearly, $n=2$, for otherwise we get a contradiction. Also, we readily see that every multiset resolving set has nonempty intersection with every copy of $K_2$ in $G\boxtimes K_2$ and that $\dim_{ms}(G\boxtimes K_2)\ge |V(G)|$. Let $X'$ be a multiset basis of $G\boxtimes K_2$. If $X'$ contains both vertices of one copy of $K_2$, then these two vertices have the same multiset representation as they are twins. Thus, $X'$ contains exactly one vertex from each copy of $K_2$. If $G$ is not multiset distance irregular, then there are two vertices $x,y\in V(G)$ such that $\msrepr(x|V(G))=\msrepr(y|V(G))$. Now, the vertices in the copies of $K_2$ corresponding to $x$ and $y$ that are not in $X'$ have the same multiset representation, a contradiction. Therefore, $G$ is multiset distance irregular, and the proof is completed.
\end{proof}

\begin{corollary}
For any  multiset distance irregular graph $G$, $\dim_{ms}(G\boxtimes K_2)=|V(G)|$.
\end{corollary}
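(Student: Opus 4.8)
The plan is to derive this corollary directly from Theorem~\ref{th:G-K_2}, since the final statement is essentially the $n=2$ case of that result specialized to multiset distance irregular graphs. The key observation is that Theorem~\ref{th:G-K_2} not only characterizes \emph{when} $\dim_{ms}(G\boxtimes K_n)$ is finite, but its proof actually establishes the exact value $\dim_{ms}(G\boxtimes K_2)=|V(G)|$ whenever $G$ is multiset distance irregular. So the corollary is an immediate consequence.

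First I would invoke the lower bound. As argued in the proof of Theorem~\ref{th:G-K_2}, the two vertices of each copy of $K_2$ are twins (they share the same closed neighborhood), and any multiset resolving set must contain at least one vertex from each such copy, for otherwise the two twins would share the same multiset representation. Since there are exactly $|V(G)|$ copies of $K_2$ in $G\boxtimes K_2$, this forces $\dim_{ms}(G\boxtimes K_2)\ge |V(G)|$.

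Next I would establish the matching upper bound using the multiset distance irregularity of $G$. Take the set $X$ consisting of one entire copy of $G$ inside $G\boxtimes K_2$; this set has cardinality $|V(G)|$. Because $G$ is multiset distance irregular, we have $\msrepr(x|V(G))\ne \msrepr(y|V(G))$ for any two distinct $x,y\in V(G)$, and this distinctness of distance multisets within the copy of $G$ propagates to distinct multiset representations in $G\boxtimes K_2$ with respect to $X$. Hence $X$ is a multiset resolving set, giving $\dim_{ms}(G\boxtimes K_2)\le |V(G)|$. Combining the two bounds yields the equality $\dim_{ms}(G\boxtimes K_2)=|V(G)|$.

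There is essentially no obstacle here, as the corollary is a restatement of the equality already derived inside the proof of Theorem~\ref{th:G-K_2}; the only care needed is to note that both the lower bound argument and the construction of the resolving set $X$ are carried out explicitly in that proof for exactly the multiset distance irregular case, so the corollary follows with no additional work.
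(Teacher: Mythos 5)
Your proposal is correct and matches the paper's treatment: the paper gives no separate proof of this corollary precisely because both bounds (the twin-based lower bound $\dim_{ms}(G\boxtimes K_2)\ge |V(G)|$ and the upper bound via a set $X$ consisting of one whole copy of $G$) are established verbatim inside the proof of Theorem~\ref{th:G-K_2}, which is exactly the derivation you reproduce.
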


\section{Concluding remarks}

This work firstly shows that two metric parameters represent the same in graph theory, that is, multiset dimension and ID-colorings are the same. We have also considered some computational and combinatorial problems on this parameter. As a consequence of the study a number of possible future research lines have been detected. We next remark a few that could be of interest from our humble opinion.

\begin{itemize}
  \item We have proved that finding the multiset dimension of graphs is in general NP-hard. However, not much is known on special classes of graphs. Does finding such parameters remain NP-hard even when restricted to trees, chordal graphs or planar graphs?
  \item The multiset dimension of the king grid has been bounded above by 4. Is it true that $\dim_{ms}(P_n \boxtimes P_n) = 4$ for any $n\ge 7$? In addition, this result gives a first step into considering the multiset dimension of the strong product graphs in general. Moreover, the study of such parameter in some other related graph products is worthwhile as well.
  \item We have related the multiset dimension of graphs with multiset distance irregular graphs. Does such graphs play a significant role in some investigations on the multiset dimension of general graphs or at least in some product related structures?
  \item  The notion of identification spectrum of an ID-graph $G$ was introduced in \cite{Chartrand-2021}, which can be understood as the set of positive integers $r$ for which there is a multiset resolving set of cardinality $r$. In this sense, since we have proved that $\dim_{ms}(P_n \boxtimes P_n) \le 4$ (equivalently $P_n \boxtimes P_n$ is an ID-graph) for any $n\ge 7$, we wonder which is the identification spectrum of $P_n \boxtimes P_n$.
  \item Characterize all the graphs $G$ for which $\dim(G)=\dim_{ms}(G)$.
\end{itemize}

\subsection*{Acknowledgements}

Anni Hakanen has been supported by the Jenny and Antti Wihuri Foundation, and also partially supported by the ANR project GRALMECO (ANR-21-CE48-0004-01) and Academy of Finland grant number 338797. Ismael G. Yero has been partially supported by the Spanish Ministry of Science and Innovation through the grant PID2019-105824GB-I00.

\section*{Declaration of interests}

The authors declare that they have no known competing financial interests or personal relationships that could have appeared to influence the work reported in this paper.

\section*{Data availability}

Our manuscript has no associated data.

\end{document}